\documentclass{article}
\usepackage{amsmath}
\usepackage{amssymb}
\usepackage{amsthm}
\usepackage{graphicx}
\usepackage[utf8]{inputenc}
\allowdisplaybreaks
\bibliographystyle{elsarticle-num}
\usepackage{caption}
\usepackage{cite}
\usepackage{authblk}
\usepackage{esint}
\usepackage{bm}

\newtheorem{theorem}{Theorem}
\newtheorem{lemma}{Lemma}
\newtheorem{corollary}{Corollary}
\newtheorem{remark}{Remark}

\usepackage{fullpage}

\usepackage{fullpage}

\title{Velocity of viscous fingers in miscible displacement:\\ Intermediate concentration}

\author[1]{Fedor Bakharev}
\author[1]{Aleksandr Enin}
\author[1]{Sergey Matveenko}
\author[1]{Dmitry Pavlov}
\author[2]{\\Yulia Petrova}
\author[1,3]{Nikita Rastegaev}
\author[1,2,4]{Sergey Tikhomirov}

\affil[1]{\small{St. Petersburg State University,
University Embankment, 7/9, St. Peterburg, 199034 Russia}}
\affil[2]{\small{Pontifícia Universidade Católica do Rio de Janeiro, Rua Marquês de São Vicente, 225,\newline Gávea Rio de Janeiro, RJ - Brasil Cep: 22451-900}}
\affil[3]{\small{St. Petersburg Department of V.~A.~Steklov Institute of Mathematics of the Russian Academy of Sciences, \newline
191023, 27 Fontanka, St. Petersburg, Russia
}}
\affil[4]{\small{Corresponding Author, sergey.tikhomirov@gmail.com
}}

\date{May 2024}

\begin{document}





\maketitle

\begin{abstract} 
We investigate one-phase flow in porous medium corresponding to a miscible displacement process in which the viscosity of the injected fluid is smaller than the viscosity in the reservoir fluid, which frequently leads to the formation of a mixing zone characterized by thin fingers. The mixing zone grows in time due to the difference in speed between its leading and trailing edges. The transverse flow equilibrium (TFE) model provides estimates of these speeds. We propose an enhancement for the TFE estimates, and provide its theoretical justification. It is based on the assumption that an intermediate concentration exists near the tip of the finger, which allows to reduce the integration interval in the speed estimate. Numerical simulations were conducted that corroborate the new estimates within the computational fluid dynamics model. The refined estimates offer greater accuracy than those provided by the original TFE model.  

\smallskip
\noindent {\bf Keywords. } viscous fingers, porous media, miscible displacement, transverse flow equilibrium.

\end{abstract}

\section{Introduction. Motivation}
Like many similar processes, the miscible displacement of a viscous liquid with a liquid of smaller viscosity in a porous medium is inherently unstable\footnote{In the paper the word ``viscosity'' refers to dynamic viscosity for liquids, see~\cite[Chap.~1,5]{bedrikovetsky2013} for details on role of dynamic viscosity for flows in porous media.}. The study of instabilities of this type originates in the pioneering works on the displacement in a Hele-Shaw cell by Saffman \& Taylor \cite{SaffmanTaylor1958} in the context of immiscible displacement and by Wooding \cite{Wooding1969} in the context of miscible displacement. The onset of instability usually leads to the formation of a mixing zone at the region of the viscosity jump. This zone is commonly composed of thin piercing fingers. For an overview of the early works on viscous fingering in porous media see \cite{Homsy1987,Tanveer2000,YYS2002}, and for more recent ones~\cite{Nijjer2018,scovazzi2017} (and references therein). Let us note that the mixing zone occurs despite the stabilizing effect of diffusion, see e.g. \cite{ChuokeMeurs1959,Outmans1962,Perrine1961-I,Perrine1961-II, Claridge}.

Viscous miscible displacement appears in petroleum engineering, especially in enhanced oil recovery methods \cite{Lake,green1998eor}, for example displacement of oil by a solvent \cite{Peaceman1962} and a polymer flooding  followed by a water post-flush \cite[Chap. 8]{Lake}, \cite[Chap. 5,7]{green1998eor}. 
The prototype example for this paper was the post-flush process of a polymer slug in surfactant-polymer flooding~\cite[Chap. 7]{green1998eor},~\cite{samanta2011}. 
While surfactant-polymer flooding is clearly a multiphase problem, we study the development of viscous fingers in the so-called post-flush zone---the region between the injection well and the polymer-surfactant slug; where there is no surfactant, the oil phase is immovable, thus it is natural to assume the displacement to be one-phase. 
Without loss of generality, one could think of it as of the porous media filled with polymer solution, the injected liquid being pure water. The  
viscosity of a polymer solution is a monotonically increasing function of the concentration of polymer; hence the viscosity of injected water is lower than   
viscosity of polymer solution, which would generate viscous fingers. The polymer/water viscosity ratio and the dependence of viscosity on the concentration of dissolved polymer plays a critical role for the displacement pattern. For further improvements in the design of the post-flush injection scheme see~\cite{Claridge, GVB,tikhomirov2021spe}.

Laboratory experiments and numerical simulations, see e.g. \cite{Nijjer2018, Linear} and references therein, show that if the P\'eclet number is high, then the mixing zone grows linearly in time, and its leading and trailing edges move with asymptotically constant speeds $v^f$ and $v^b$, respectively. Several models are proposed in the literature to describe and estimate these speeds (the Koval model \cite{Koval}, the simplified Koval model \cite{Booth}, the Todd--Longstaff model \cite{TL}, etc.). Among the proposed models, we give preference to one --- the TFE model, which offers pessimistic theoretical estimates of speeds under the \emph{transverse flow equilibrium} assumption~\cite{Yortsos}. This model was rigorously analyzed first for gravity-driven fingers in \cite{Otto2005, Otto2006} and later for viscosity-driven fingers in \cite{Yortsos}. It is shown in~\cite{Linear} that the Koval model does not always provide the lower estimate for the speed of the trailing edge, and the TFE model, while giving the necessary estimates for leading and trailing edges, sometimes looks too pessimistic. The latter suggests that the TFE model requires some refinement.

We suggest an approach to such a refinement based on intermediate polymer concentration in the fingers. The TFE speed estimate includes an integral over the concentration range interval. We propose that this interval could be reduced, due to the fact that  in the viscous fingers pattern not all concentrations are represented near the mixing zone's limits. If we take an integral over just the concentrations present near the shock, it could produce more accurate estimates. Detailed analysis of numerical simulation appeared to be non-trivial, including necessity of certain special considerations in each individual case (see Sect.~\ref{sect4}). We used a different strategy to justify our approach. We provided a rigorous theorem based on idea on intermediate concentrations for a simplified TFE model which is believed to well-approximate miscible displacement in porous media and detailed analysis of several numerical simulations which align well to our hypothesis. Another theoretical justification for the presence of intermediate concentration for a simplified two-tubes model of gravity-driven fingers was given in \cite{PTY}.

In the paper we consider viscous fingers in homogeneous media. 
For a study of the effect of heterogeneity of the medium to the mixing zone formed by viscous fingers see \cite{CGS}; the above mentioned  Koval model was also developed for heterogeneous media.

The structure of the paper is as follows. In Sect.~\ref{sect2} we propose an improved version of the TFE estimates, formula~\eqref{TFE-estimate-mod-estimates}, based on the assumption that there exists an intermediate concentration near the tip of the finger. Additionally, we provide a theoretical justification (Theorems~\ref{Theorem_apost} and~\ref{Theorem_apost_back}, see Appendix~A for the proof) for our improved estimates based on the maximum principle by constructing partial one-dimensional upper and lower bounds to the solution under the TFE assumption. Sect.~\ref{sect3} contains the description of the numerical model used in our simulations and the parameters of individual simulation runs. In Sect.~\ref{sect4} we analyze the results of numerical simulations and compare them to our expectations based on theoretical estimates, see Appendix B for analysis of two simulations with higher precision. Sect.~\ref{conclusion} consists of our conclusions and further discussion.

\section{Theoretical basis for TFE estimate improvement}
\label{sect2}
We study fully miscible flow in porous media under the 
macroscopic approximation of the mechanics of a continuous medium (see e.g.~\cite[Chap.~1,5]{bedrikovetsky2013} or~\cite[Sect.~2]{scovazzi2017}). We consider 
a two-dimensional space domain $(x,y)\in[0,L] \times[0,H]$ with length $L$ and height $H$ represented by the isotropic porous medium with porosity~$\phi$ and permeability~$k$. The domain is initially saturated with water with a solvent (polymer) at its maximal concentration $c_{\max}$, and maximal viscosity $\mu(c_{\max})$. At the left boundary, $x=0$, which represents the injection well, pure water is injected at a constant flux rate $Q$; the concentration of polymer 
in water is equal to 0, and its viscosity is $\mu(0)$ (minimal). The diffusivity of the polymer is $D$ and gravity is neglected. We assume the fluids to be incompressible, the polymer to be fully miscible with water, and water with solvent to be a Newtonian fluid.

We employ the two-dimensional model originally proposed for solvent flooding of oil fields \cite{Peaceman1962}, hereinafter referred to as Peaceman model. It couples the solvent transport equation, the incompressibility condition and Darcy's law for incompressible fluid flow. 
A similar model was considered in \cite{chen2001, dewit2005, mishra2008, pramanik2016, sharma2021} in various assumptions and geometries.
\begin{align}
    \label{01-Peaceman_model}
    &\phi\frac{\partial c}{\partial t} + \mathrm{div}({\bm{q}}c)=D\Delta c,
    \\
    &\label{02-Peaceman_model}
    \bm{q}=-\frac{k}{\mu(c)}\nabla p=-m(c)\nabla p,\qquad \mathrm{div}(\bm{q})=0,
\end{align}
where $c$ is the concentration of the polymer, $\bm{q}$ is the flux,  $p$ is the pressure,  $\mu=\mu(c)$ is the viscosity function, which is assumed to be positive and increasing. The mobility function is $m(c)=k/\mu(c)$. We supply the equations~\eqref{01-Peaceman_model}, \eqref{02-Peaceman_model} with the no-flow boundary condition at $y=\{0,H\}$; and a constant injection flux at the injection well ($x=0$), specifically $\bm{q}(t,0,y)=(Q,0)$. 
By scaling the variables, we can set $L=1$, $Q=1$ and $\phi=1$, and we will assume so in this section. 

The initial condition is $c=c_{\max}$, indicating that prior to injection the area of the reservoir under consideration is filled with the maximum concentration of polymer. The injected fluid is pure water ($c = 0$), hence the polymer concentration never exceeds $c_{\max}$ at any point because the injected water only dissolves the polymer; the equations \eqref{02-Peaceman_model} do not change $c$, while the transport equation \eqref{01-Peaceman_model} will not allow $c$ to exceed the current maximum concentration throughout the field.

Under the assumption of transverse flow equilibrium (TFE, see Remark \ref{rem:TFE} below), rigorous estimates for the velocities of the edges of the mixing zone were made in \cite{Otto2006, Yortsos}. According to this model, the velocities of the front and back ends of the mixing zone can be estimated as follows:
    \begin{align}
    \label{TFE-estimate}
        v^f \leqslant\frac{\overline{m}(0,c_{\max})}{m(c_{\max})} , 
        \qquad v^b\geqslant \frac{\overline{m}(0,c_{\max})}{m(0)},
    \end{align}
    where $\overline{m}$ is the mean value of the function $m$ on the interval $(a,b)$:    
    \begin{align*}
        \overline{m}(a,b)=\frac{1}{b-a}\int\limits_a^b m(c)\,dc.
    \end{align*}

\begin{remark}\label{rem:TFE}
   The TFE model is derived under the assumption that pressure $p$ depends  primarily on the $x$ coordinate and is approximately constant with respect to $y$. In that case the system~\eqref{01-Peaceman_model}--\eqref{02-Peaceman_model} could be replaced by the following one (see \cite{Otto2006, Yortsos} for more details): 
   \begin{align}
   \label{03-TFE-1}
   &\frac{\partial c}{\partial t} + \mathrm{div}({\bm{q}}c)=D\Delta c,
   \\
   \label{03-TFE}
   & {\bm{q}} = (q^x, q^y), \qquad q^x = 
    \dfrac{m(c)}{\displaystyle \frac{1}{H}\int_0^H\!\!\! m(c)\,dy}, \qquad \mathrm{div}({\bm{q}})=0.
\end{align}
The system of equations \eqref{03-TFE-1}, \eqref{03-TFE} is used when referring to inequalities \eqref{TFE-estimate} and further estimates \eqref{TFE-estimate-mod-1}. 
All numerical simulations in this paper are performed for the Peaceman model~\eqref{01-Peaceman_model},~\eqref{02-Peaceman_model}.

\end{remark}

As demonstrated by numerical simulations in \cite{Linear} the estimates \eqref{TFE-estimate} give plenty of margin for the fronts' speed. At the same time, according to \cite[Sect.~3]{Otto2006}, the estimates in the proof of \eqref{TFE-estimate} are sharp at the tip of the fastest finger, provided that the concentration at the tip of the finger falls from $c_{\max}$ to 0.
These arguments suggest that at the tip of the finger there should exist some intermediate concentration. The cursory analysis of concentrations in the numerical simulation confirms this conjecture (see Fig.~\ref{fig:onefinger} and also Sect.~\ref{sect3} for how it was obtained). Note that at the tip of the finger one observes a concentration shock with a distinctly non-zero lower bound.

\begin{figure}
    \centering
    \includegraphics[width = 0.45\textwidth]{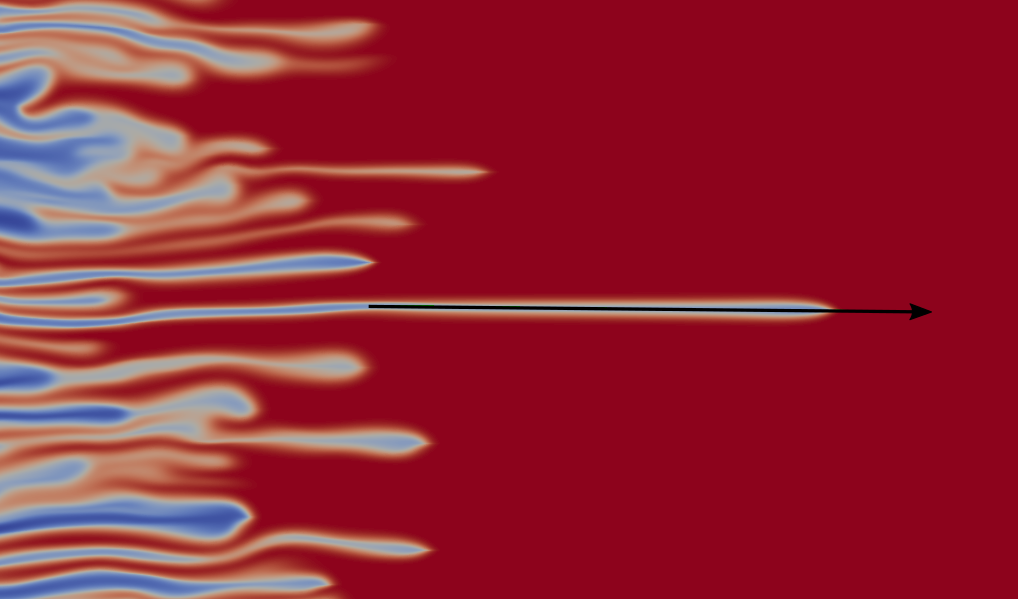}
    \includegraphics[width = 0.45\textwidth]{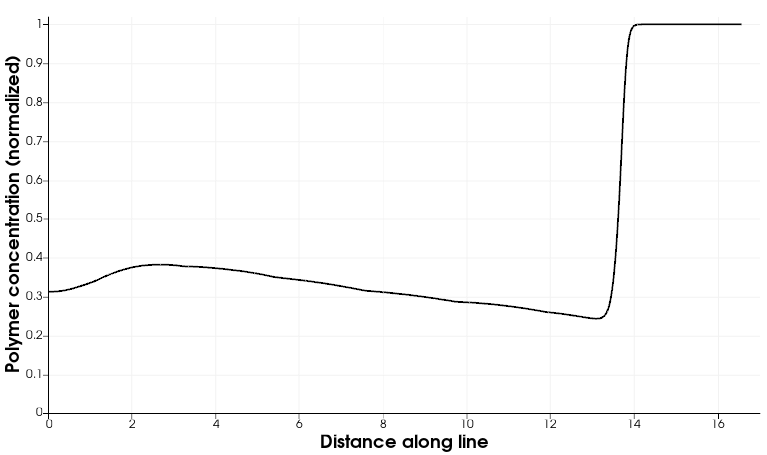}
    \caption{Profile of polymer concentration along a finger. The example shown was obtained
    from a simulation with the linear viscosity model, $M=5$ (see Sect.~\ref{sec:viscosity})}
    \label{fig:onefinger}
\end{figure}

However, given that fingers are constantly restructuring and aren't always perfectly straight, it is challenging to define what the ``tip'' of a finger is. To determine the intermediate concentration, we suggest employing a more indirect method given below.

As mentioned above, TFE estimates are usually too pessimistic, therefore it is natural to assume that the interval of integration in \eqref{TFE-estimate} might be reduced. To verify this assumption, we calculated the front and back velocities of the concentration level lines. We expect (see Fig.~\ref{fig:conjectureillustrated}) that the front velocity would be approximately constant for concentrations in some interval $[c^*,c_{\max}]$ and drop noticeably for lower concentrations. This would empirically show that we have something approximating a shock between concentrations $c^*$ and $c_{\max}$ near the tip of a single finger. Similarly, we expect the back velocities to be approximately constant in some interval $[0, c_*]$. We tested these conjectures against a variety of numerical experiments, and found a degree of agreement with our expectations. Therefore, we propose the use of a modified TFE model to estimate
\begin{align}
\label{TFE-estimate-mod-1}
    v^f \sim \frac{\overline{m}(c^*,c_{\max})}{m(c_{\max})}, \quad v^b \sim \frac{\overline{m}(0, c_*)}{m(0)}.  
\end{align} 
\begin{remark}
    Theorems \ref{Theorem_apost}, \ref{Theorem_apost_back} below prove one-sided estimates
    \begin{align}
    \label{TFE-estimate-mod-estimates}
    v^f \leqslant \frac{\overline{m}(c^*,c_{\max})}{m(c_{\max})}, \quad v^b \geqslant \frac{\overline{m}(0, c_*)}{m(0)}.
\end{align} 
At present, there is only numerical evidence for \eqref{TFE-estimate-mod-1}. Further theoretical understanding of the problem is required to prove the sharpness of the estimates \eqref{TFE-estimate-mod-estimates}. For a less rigorous analysis of a similar question see \cite[Sect.~3]{Otto2006}.
\end{remark}

\begin{figure}[htbp]
    \centering    
    \includegraphics[width = 0.45\textwidth]
    {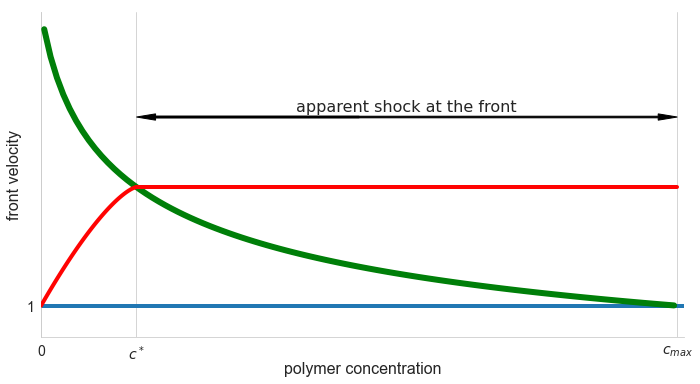}
    \includegraphics[width = 0.45\textwidth]
    {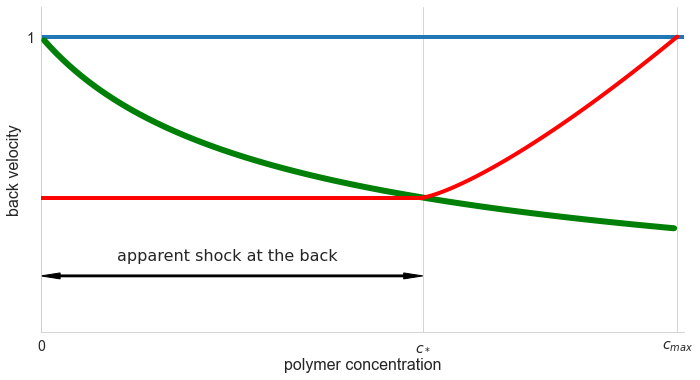}
    \caption{Illustration for our expectations for the level line velocities. Green lines correspond  to the graph of conjectured theoretical approximation of $v^f$ and $v^b$, defined by formula~\eqref{TFE-estimate-mod-1}, as a function of $c^*$ and $c_*$, respectively. Red lines correspond to expected true values for the front speed $v^f$ and back speed $v^b$ of the level lines for the concentration $c\in[0,c_{\max}]$. The conjecture in terms of the left figure states as follows: after the intersection point of the red and green curves, the red one is constant. Similarly, for the right figure: before the intersection point of the red and green curves, the red one is constant. The blue line corresponds to velocity equal to 1.}
    \label{fig:conjectureillustrated}
\end{figure}

To better justify these estimates, we constructed an argument similar to that in \cite{Otto2006, Yortsos}, 
where a lower bound $c_{lb}$  to the solution of the TFE system \eqref{03-TFE-1},\eqref{03-TFE} was proposed, satisfying the auxiliary equation
\begin{equation}
\label{Lower_bound_equation}
\dfrac{\partial c_{lb}}{\partial t} + \dfrac{m(c_{lb})}{m(c_{\max})}\dfrac{\partial c_{lb}}{\partial x} = D \dfrac{\partial^2 c_{lb}}{\partial x^2}.
\end{equation}
With the help of the maximum principle it was proved that if $c-c_{lb} > 0$ at $t=0$, then it remains positive at all times. We used a similar idea to prove the following a posteriori estimate. 

\begin{theorem}
\label{Theorem_apost}
Let $c(t, x, y)$ be a solution of the TFE model \eqref{03-TFE-1}, \eqref{03-TFE} in the strip $(x,y)\in\mathbb{R}\times[0,H]$, with no-flow boundary conditions at $y=\{0,H\}$, that is $q^y(t,x,0)=q^y(t,x,H)=0$, and initial data satisfying
\begin{align}
\label{initial-data-theorem}
\lim\limits_{x\to-\infty}\max\limits_y  c(0,x,y)=0, \qquad \lim\limits_{x\to+\infty}\min\limits_y  c(0,x,y)=c_{\max}.
\end{align}
Denote by 
\begin{align}
\label{x-front}
    x^f(t,r) := \sup\{ x\in\mathbb{R}: c(t,x,y)=r \text{ for some } y\in[0,H]  \}
\end{align}
the front tip of the level line for the concentration $r$.
Suppose there is a concentration $c^*$, such that
\begin{equation}
\label{a_post_est}
x^f(t, c^*) \leqslant \dfrac{\overline{m}(c^*, c_{\max})}{m(c_{\max})} \cdot t + l_1
\end{equation}
for some constant $l_1$ and all $t>0$. Then for any $\widetilde{c} \in (c^*, c_{\max})$ and any $v>\dfrac{\overline{m}(c^*, c_{\max})}{m(c_{\max})}$ there exists a constant $l_2> 0$ that depends on $v$,  such that as $t\to+\infty$  we have
\begin{equation}
\label{a_post_est1}
x^f(t, \widetilde{c}) \leqslant vt + l_2.
\end{equation}
\end{theorem}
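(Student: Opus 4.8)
The plan is to run a maximum-principle comparison between the true solution $c$ and a one-dimensional barrier that travels at the prescribed speed $v$ and is tied to the a priori control of the $c^*$-front. The whole difficulty is already visible in the two comparisons that are available globally: writing $A(t,x)=\tfrac1H\int_0^H m(c)\,dy\in[m(0),m(c_{\max})]$ and $q^x=m(c)/A$, a first-touching argument shows that an increasing one-dimensional profile solving the auxiliary equation \eqref{Lower_bound_equation} (denominator $m(c_{\max})$) stays \emph{above} $c$, while the analogous profile with denominator $m(0)$ stays \emph{below} $c$. An upper bound on $x^f(t,\widetilde c)$ requires a \emph{lower} barrier (then $c\ge\psi$ and $\psi>\widetilde c$ force $c>\widetilde c$, so no $\widetilde c$-level survives further right); but the lower barrier valid globally carries the denominator $m(0)$ and hence the pessimistic characteristic speed $m(\widetilde c)/m(0)$, whereas the favorable denominator $m(c_{\max})$, which alone produces the shock speed $\tfrac{\overline m(c^*,c_{\max})}{m(c_{\max})}$, sits on the wrong side of the comparison. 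Resolving this tension is the heart of the matter.

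First I would record a topological reduction. Since $c$ is continuous and $\lim_{x\to+\infty}\min_y c(0,x,y)=c_{\max}>c^*$ (a property that persists for $t>0$, as the unperturbed reservoir is not reached in finite time), for each $t$ the connected half-strip $\{x>x^f(t,c^*)\}$ never meets the level $\{c=c^*\}$, so $c>c^*$ there. With the hypothesis \eqref{a_post_est} this confines $\{c\le c^*\}$ to $\{x\le \tfrac{\overline m(c^*,c_{\max})}{m(c_{\max})}t+l_1\}$, whence $A(t,x)\ge m(c^*)$ on every cross-section strictly ahead of that line, and $A\equiv m(c_{\max})$ on the still-unperturbed part of the reservoir where $c\equiv c_{\max}$.

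I would then fix $v>\tfrac{\overline m(c^*,c_{\max})}{m(c_{\max})}$ and build an increasing traveling barrier $\psi(t,x)=\Phi(x-vt)$ with $\Phi(-\infty)=c^*$, $\Phi(+\infty)=c_{\max}$, anchored so that its foot rides on the moving line $x=\tfrac{\overline m(c^*,c_{\max})}{m(c_{\max})}t+l_1$; by Rankine--Hugoniot the $c^*$-to-$c_{\max}$ transition of the denominator-$m(c_{\max})$ equation travels at exactly $\tfrac{\overline m(c^*,c_{\max})}{m(c_{\max})}$, so the strict margin $v-\tfrac{\overline m(c^*,c_{\max})}{m(c_{\max})}>0$ turns $\psi$ into a strict subsolution and places its $\widetilde c$-level at $vt+O(1)$, which is precisely \eqref{a_post_est1} once $\psi$ is shown to lie below $c$. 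The main obstacle is exactly this last point. In the first-touching computation the transport mismatch reduces, on the contact set, to $m(\psi)\bigl(\tfrac1{m(c_{\max})}-\tfrac1A\bigr)\partial_x\psi$, which has the sign needed for a lower barrier only where $A\ge m(c_{\max})$, i.e. only on the pure-reservoir part of each column; off the reservoir the nonlocal coefficient spoils the sign and the comparison cannot be run verbatim. The real work is therefore to confine the argument to a neighborhood of the leading edge---where, by the reduction above, $A$ is forced close to $m(c_{\max})$---and to let the speed margin $v-\tfrac{\overline m(c^*,c_{\max})}{m(c_{\max})}$ dominate, in a Gronwall fashion, the error committed by replacing $A$ with $m(c_{\max})$, absorbing the transition width and the initial layer into the constant $l_2(v)$. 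This reduction, rather than the routine contact-point estimate modeled on \cite{Otto2006, Yortsos}, is where the difficulty lies.
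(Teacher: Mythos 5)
Your proposal is built on a sign error that reverses the whole comparison structure, and the ``tension'' you spend most of the argument trying to resolve does not exist. Since $\mu(c)$ is positive and \emph{increasing}, the mobility $m(c)=k/\mu(c)$ is \emph{decreasing}: $m(c_{\max})$ is its minimum and $m(0)$ its maximum, so $A(t,x)=\tfrac1H\int_0^H m(c)\,dy$ lies in $[m(c_{\max}),m(0)]$ for \emph{every} cross-section of the strip (you wrote the interval as $[m(0),m(c_{\max})]$, which is the tell-tale of the reversal). At a first interior touching point of $c$ with an increasing one-dimensional profile $\psi$ solving the auxiliary equation with denominator $m_*$, the transport mismatch is $\bigl(\tfrac{m(\psi)}{m_*}-\tfrac{m(\psi)}{A}\bigr)\partial_x\psi$, which is nonnegative precisely when $A\geqslant m_*$. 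Hence the profile with denominator $m(c_{\max})$ is a \emph{global lower} barrier (this inequality holds everywhere, not only on the pure-reservoir part of a column), and the profile with denominator $m(0)$ is the \emph{upper} barrier --- exactly opposite to what you assert. Your subsequent claim that $A\geqslant m(c^*)$ ahead of the $c^*$-front is also backwards ($c>c^*$ gives $m(c)<m(c^*)$, so $A<m(c^*)$ there). Because of this, the localization-near-the-leading-edge and Gronwall absorption you identify as ``the real work'' are machinery invented to overcome an obstruction that is not present, and as written they do not constitute a proof of anything.

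The paper's actual argument is the one you dismiss as unavailable: take a traveling wave $c_{lb,\delta}$ of the denominator-$m(c_{\max})$ equation connecting $c^*-\delta$ to $c_{\max}-\delta$; its Rankine--Hugoniot speed $\overline m(c^*-\delta,c_{\max}-\delta)/m(c_{\max})$ \emph{strictly exceeds} $\overline m(c^*,c_{\max})/m(c_{\max})$ (again because $m$ is decreasing), so by hypothesis \eqref{a_post_est} the wave outruns the $c^*$-front and its value at $x=x^f(t,c^*)$ stays below $c^*\leqslant c$ there. This handles the boundary of the moving domain $\{x\geqslant x^f(t,c^*)\}$, and the interior comparison is the routine contact-point computation with the correct sign of $A-m(c_{\max})\geqslant 0$. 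Letting $\delta\to0$ gives \eqref{a_post_est1}. Your opening topological observation ($c>c^*$ strictly ahead of $x^f(t,c^*)$) and your closing logic (a lower barrier exceeding $\widetilde c$ beyond $vt+l_2$ forces $x^f(t,\widetilde c)\leqslant vt+l_2$) are both correct and are the right skeleton; to repair the proof you need only fix the monotonicity of $m$, after which the $m(c_{\max})$-barrier sits on the right side of $c$ and no localization is needed.
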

\begin{remark}
\label{remark-estimate}
With appropriate definition of the front velocity for concentration level $c$, estimate \eqref{a_post_est1} 
could be interpreted as
\begin{equation}\label{eq-rem2}
v^f(c) \leqslant \dfrac{\overline{m}(c^*, c_{\max})}{m(c_{\max})}, \quad c \geqslant c^*.   
\end{equation}
\end{remark}

The estimate for the back velocity $v^b$ is justified with a similar construction of an upper bound auxiliary equation solution and a similar theorem.

\begin{theorem}
\label{Theorem_apost_back}
Let $c(t, x, y)$ be a solution of the TFE model \eqref{03-TFE-1}, \eqref{03-TFE} in the strip $(x,y)\in\mathbb{R}\times[0,H]$, with no-flow boundary condition at $y=\{0,H\}$ and initial data satisfying~\eqref{initial-data-theorem}.
Denote by 
$$
x^b(t,r) := \inf\{ x\in\mathbb{R}: c(t,x,y)=r \text{ for some } y\in[0,H]  \}
$$ 
the back tip of the level line for the concentration $r$.
Suppose there is a concentration $c_*$, such that
\begin{equation}\notag
x^b(t, c_*) \geqslant \dfrac{\overline{m}(0, c_*)}{m(0)} \cdot t - l_1
\end{equation}
for some constant $l_1$ and all $t>0$. Then for any $\widetilde{c} \in (0, c_*)$ and any $v<\dfrac{\overline{m}(0,c_*)}{m(0)}$ there exists a constant $l_2> 0$ that depends on $v$,  such that as $t\to+\infty$  we have
\begin{equation}
\label{a_post_est_back}
x^b(t, \widetilde{c}) \geqslant v t - l_2.
\end{equation}
\end{theorem}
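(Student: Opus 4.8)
The plan is to prove the trailing analogue by the same comparison-principle construction that underlies Theorem~\ref{Theorem_apost}, now anchored at the trailing concentration $0$ and the mobility $m(0)$ in place of $c_{\max}$ and $m(c_{\max})$. For a small $\delta>0$ I would take $c_{ub,\delta}(t,x)=\Phi_\delta(x-s_\delta t)$ to be the monotone traveling-wave solution of
\begin{equation}\label{eq:aux-back}
\frac{\partial c_{ub}}{\partial t}+\frac{m(c_{ub})}{m(0)}\frac{\partial c_{ub}}{\partial x}=D\frac{\partial^2 c_{ub}}{\partial x^2}
\end{equation}
connecting $\Phi_\delta(-\infty)=\delta$ to $\Phi_\delta(+\infty)=c_*+\delta$. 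Since $m$ is decreasing (as $\mu$ is increasing), the flux $c\mapsto \tfrac1{m(0)}\int_0^{c}m$ is concave, the Lax condition holds for this increasing connection, and a strictly increasing profile $\Phi_\delta$ exists. Integrating \eqref{eq:aux-back} across the profile gives the Rankine--Hugoniot speed $s_\delta=\overline{m}(\delta,c_*+\delta)/m(0)$; because $m$ is decreasing this lies strictly below $\overline{m}(0,c_*)/m(0)$ and increases to it as $\delta\to0$. Given $v<\overline{m}(0,c_*)/m(0)$ I would fix $\delta=\delta(v)$ so small that $s_\delta>v$, and $l_2$ will then be determined by the transition width of $\Phi_\delta$ and the offset imposed below.

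The function $c_{ub,\delta}$ is only a \emph{partial} upper barrier. Writing $w=c-c_{ub,\delta}$ and using $\mathrm{div}\,\bm q=0$ one finds
\begin{equation}\label{eq:w-back}
\partial_t w+\bm q\cdot\nabla w-D\Delta w=-\Big(q^x-\tfrac{m(c_{ub,\delta})}{m(0)}\Big)\partial_x c_{ub,\delta},
\end{equation}
and since the transverse average satisfies $\tfrac1H\int_0^H m(c)\,dy\le m(0)$ one has $q^x=m(c)/\big(\tfrac1H\int_0^H m(c)\,dy\big)\ge m(c)/m(0)$ at every contact point $c=c_{ub,\delta}$; together with $\partial_x c_{ub,\delta}\ge0$ this makes the right-hand side of \eqref{eq:w-back} nonpositive there, so $w$ is a subsolution and the maximum principle controls $\sup w$, which is the direction we need. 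The catch is that $\Phi_\delta$ is capped at $c_*+\delta<c_{\max}$, hence it can lie above $c$ only in the trailing region. I would therefore run the comparison on the moving domain $\Omega_t=\{x\le x^b(t,c_*)\}\times[0,H]$, on which $c\le c_*$ by definition of $x^b(t,c_*)$ together with \eqref{initial-data-theorem} and a connectedness argument in $y$.

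On the parabolic boundary of $\Omega_t$ I would verify $w\le0$ as follows: as $x\to-\infty$ one has $c\to0<\delta=\Phi_\delta(-\infty)$; at $t=0$ the wave is shifted far enough right that $c(0,\cdot)\le c_{ub,\delta}(0,\cdot)$ on $\Omega_0$; and the no-flow condition at $y=\{0,H\}$ is compatible with the $y$-independent barrier via even reflection. The only delicate piece is the free right boundary $x=x^b(t,c_*)$, where $c\le c_*$ and one needs $c_{ub,\delta}(t,x^b(t,c_*))\ge c_*$. Here the a posteriori hypothesis enters: $\Phi_\delta$ reaches the value $c_*$ at a point advancing with speed $s_\delta<\overline{m}(0,c_*)/m(0)$, whereas $x^b(t,c_*)\ge \overline{m}(0,c_*)/m(0)\cdot t-l_1$, so the right boundary outruns the $c_*$-level of the barrier for all large $t$, giving $c_{ub,\delta}\ge c_*\ge c$ there. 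The maximum principle then yields $c\le c_{ub,\delta}$ on $\Omega_t$. Finally, for $\widetilde c\in(0,c_*)$ the barrier attains $\widetilde c$ at $X_\delta(t)=s_\delta t+O(1)$, with $X_\delta(t)<x^b(t,c_*)$, so that for $x<X_\delta(t)$ one has $c\le c_{ub,\delta}<\widetilde c$; hence no point of $\Omega_t$ carries concentration $\widetilde c$ there and $x^b(t,\widetilde c)\ge X_\delta(t)\ge vt-l_2$, which is \eqref{a_post_est_back}.

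I expect the main obstacle to be exactly this treatment of the free boundary $x=x^b(t,c_*)$: the comparison runs on a time-dependent domain whose right edge is controlled only a posteriori, and one must guarantee that the barrier dominates $c$ there uniformly in $t$, while simultaneously handling the nonlocal, $y$-averaged coefficient $q^x$ and the Neumann lateral conditions in the maximum principle. This is the same difficulty that the (omitted) lemma behind Theorem~\ref{Theorem_apost} resolves for the leading edge, and I would adapt that argument, with $0$, $m(0)$ and $x^b$ replacing $c_{\max}$, $m(c_{\max})$ and $x^f$. With an appropriate notion of trailing velocity, the conclusion is precisely the bound $v^b(c)\ge\overline{m}(0,c_*)/m(0)$ for $c\le c_*$, the mirror of Remark~\ref{remark-estimate}.
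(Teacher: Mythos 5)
Your construction is the exact mirror of the paper's proof of Theorem~\ref{Theorem_apost} (traveling-wave barrier for the one-dimensional auxiliary equation, Rankine--Hugoniot speed perturbed by $\delta$ to the favorable side, the a posteriori hypothesis controlling the free boundary $x=x^b(t,c_*)$, and the sign of $q^x-m(c_{ub,\delta})/m(0)$ closing the maximum-principle argument), which is precisely how the paper says Theorem~\ref{Theorem_apost_back} is obtained. The only blemish is directional: to dominate $c$ on the trailing region at $t=0$ the increasing barrier must be shifted far enough to the \emph{left} (so its plateau near $c_*+\delta$ covers $\Omega_0$ and its $c_*$-level starts behind $x^b(0,c_*)$), not to the right as written, but the rest of your argument uses the correct positioning.
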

\begin{remark}
With appropriate definition of the back velocity for concentration level $c$, estimate \eqref{a_post_est_back} 
could be interpreted as
\begin{equation}\label{eq-rem3}
v^b(c) \geqslant \dfrac{\overline{m}(0, c_*)}{m(0)}, \quad c \leqslant c_*.    
\end{equation}
\end{remark}

In what follows (see Sect.~\ref{sect4}) we demonstrate that the approximations \eqref{TFE-estimate-mod-1} are in good agreement with the numerical experiments in most cases. We provide proof of Theorem~\ref{Theorem_apost} 
in Appendix~A. Theorem~\ref{Theorem_apost_back} is proved similarly.

\section{Description of the numerical model}
\label{sect3}

\subsection{General considerations}
A qualitative model has been set up to check the hypothesis of intermediate concentration and
corroborate the theoretical findings developed in Sect.~\ref{sect2}.
Some parameters, which are believed not to affect the qualitative results, were chosen to have values corresponding to a realistic oil field (see Table~\ref{tab:param}):
\begin{itemize}
\item the concentration $c_{\max}$ (is a free scaling factor in the equations),
\item the porosity $\phi$ (acts as an inverse distance scaling factor in the equations),
\item the base value of permeability (changes only the pressure field),
\item the base value of viscosity (changes only the pressure field),
\item the injection flux rate (acts as time scaling factor).
\end{itemize}

\begin{table}[h]
 \centering
 \begin{tabular}{l|l}
Reservoir parameters  & Data values\\
 \hline
Initial concentration of polymer, $c_{\max}$ & 0.0015\\
Viscosity of water, $\mu_0$  & 0.001 [cP]\\
Porosity, $\phi$  & 0.188\\
Diffusion coefficient, $D$  & 0 [m$^{2}$/s]\\
Viscosity ratio parameter, $M$ &  5, 10, 20, 40\\
Injection flux, $Q$ &  12 [m$^2$/day]\\
Computational domain length, $L$ & 64 [m]\\
Computational domain height, $H$ & 36 [m]\\
Linear viscosity function, $\mu_\mathrm{linear}(c)$ & $\mu_0\left(1 + (M-1)\frac{c}{c_{\max}}\right)$ [cP]\\
Quadratic viscosity function, $\mu_\mathrm{quadratic}(c)$ & $\mu_0\left(1 + (M-1)\frac{c^2}{c^2_{\max}}\right)$ [cP]\\
Exponential viscosity function, $\mu_\mathrm{exponential}(c)$ & $\mu_0 M^{c / c_{\max}}$ [cP]\\
 \end{tabular}
 \caption{The parameters used to model the miscible displacement.}
 \label{tab:param}
\end{table}

Diffusion causes fingers to dissolve or merge, which may slow down growth of the mixing zone, see recent paper \cite{Laz} for some numerical experiments.
To test the validity of estimates (\ref{TFE-estimate-mod-estimates}), we are interested in keeping the growth of the mixing zone as high as possible. Thus, the diffusion $D$ was set to zero in the model. However, the numerical scheme brings numerical diffusion (see Sect.~\ref{sec:grid}).

\subsection{Viscosity functions}\label{sec:viscosity}
While the typical choice of the viscosity function consists in assuming an Arrhenius-like exponential dependence (see e.g. \cite{Nijjer2018} and references therein), we additionally consider two other viscosity functions and our motivation is as follows. The existing theoretical estimates~\eqref{TFE-estimate} for the length of the mixing zone for the simplified TFE model depend not only on the mobility ratio, but also on the whole mobility curve (or viscosity curve). Similarly, in our idea of refining these estimates the answer~\eqref{TFE-estimate-mod-1} also depends on the whole viscosity curve. Thus, it is crucial to test numerically the conjecture for different viscosity functions and we choose linear and quadratic dependence as a natural model for positive increasing function.

For each of the three types of viscosity functions (exponential, linear, quadratic) the coefficients were chosen so that $\mu(0)=\mu_0$, $\mu(c_{\max})=\mu_0 M$:

$$\mu_\mathrm{linear}(c) = \mu_0\left(1 + (M-1)\frac{c}{c_{\max}}\right),$$
$$\mu_\mathrm{quadratic}(c) = \mu_0\left(1 + (M-1)\frac{c^2}{c^2_{\max}}\right),$$
$$\mu_\mathrm{exponential}(c) = \mu_0 M^{c / c_{\max}}.$$

We consider the values of $M = 5, 10, 20, 40$ in order to test our hypothesis against different real life-like cases. The viscosity of water $\mu_0$ was taken equal to 0.001 cP 
(the value corresponds to temperature of~$\approx 20^\circ$C). 

\subsection{Grid}\label{sec:grid}
The simulated fields is a rectangle of length 64 meters and height 36 meters. (The physical size in fact does not matter because the analysis of mixing zone growth is non-dimensionalized.)
The proportions were chosen arbitrarily, but there were some limitations. We can not have a very ``tall'' reservoir because, while we will have a lot of viscous fingers, we will not have the ability to analyze their growth over time, because they will reach the right edge of the reservoir too soon. Also, we can not have a very ``wide'' reservoir because in such a case the number of fingers will be small and they will dissolve/merge due to diffusion before they reach the right edge.

The grid was rectangular, with step size $\Delta$ equal to 2.5 cm in most 
of our simulations, so the grid had $2560\times 1440 \approx 3.7$  million cells.
The coordinate system for later calculations is normalized so that the length $L = 1$, and the height $H=9/16$. The P\'eclet number is double the horizontal grid size, 5120 in this case.
For other simulations with P\'eclet number in a similar range, see \cite{Nijjer2018}. 

The numerical scheme brings the numerical diffusion equal to $D_{\mathrm{num}} = {v \Delta}/{2\phi}$, or just ${\Delta}/{2}$ in normalized coordinates, which is equal to the inverse of the P\'eclet number. Due to limited computational resources, it was hard to obtain lower values of 
$D_{\mathrm{num}}$. However, two separate simulations 
were performed on a $5120\times 2880$ grid ($\approx 14.7$  million cells), see Appendix~B.

To stimulate the genesis of the viscous fingers, permeability was made varying across grid cells.
A random log-normal distribution was used: $\log(K) = {\cal N}(\log(K_0), 0.01)$, where $K_0 = 80 \text{ mD}$ was chosen.
Porosity was constant and equal to 0.188.

\subsection{Initial and boundary conditions}
The left side of the rectangle acted as an injection well, while the right side acted as a production well. The constant flux boundary condition (12 m$^2$/day) was set on the left side; on the right side, a constant pressure condition was maintained. (The boundary value of the pressure can be arbitrary
and is not relevant, since the equations depend only on the pressure gradient.) The time scale for later calculations was normalized so that the flux $Q = 1$. The no-flow condition was imposed on the upper and lower sides. 

Initially, the reservoir is filled with polymer of concentration ${c_{\max}} = 0.0015$. The injected fluid, as said above, has $c = 0$.

\subsection{Software}

DuMu$^\mathrm{x}$~\cite{flemisch2011dumux,Kochetal2020Dumux} is a porous medium flow framework based on the DUNE (Distributed and Unified Numerics Environment) toolbox~\cite{ans-DUNE}. Both DuMu$^\mathrm{x}$ and DUNE are open-source software and are written in C++ with extensive use of template metaprogramming.

While DUNE provides generic interfaces for the implementation of different discretization schemes,
DuMu$^\mathrm{x}$ uses the finite volume method solely. Both IMPES and fully implicit schemes were developed
in DuMu$^\mathrm{x}$ version 2, but since version 3, the focus has shifted to fully implicit schemes,
while IMPES-based implementations of porous flow are no longer developed. Following that decision, a fully
implicit scheme was used for DuMu$^\mathrm{x}$ simulation in this work. As for the flux approximation,
DuMu$^\mathrm{x}$ provides a cell-centered scheme (both two-point and multi-point flux approximation
flavors) and vertex-centered scheme, also known as the Box method. The latter was used in this work. The
basis functions used in the approximation are linear.
Finally, of the many linear solvers that DUNE provides, the implementation of the biconjugate gradient
stabilized method with the algebraic multigrid preconditioner (AMGBiCGSTAB) was used.

DuMu$^\mathrm{x}$ accepts grids in different formats, including the well-known ``msh'' format.
Gmsh open-source software~\cite{gmsh} was used to produce the rectangular grid in this work.

\subsection{Simulations and estimation of velocities of fronts}\label{sec:simulations}
We ran 12  simulations, four per each of the three viscosity models, with $M=5,10,20,40$ on the $2560 \times 1440$ grid; and two simulations
with $M=5$, one with slow-growing (linear), and the other with fast-growing (exponential) viscosity functions, on the $5120 \times 2880$ grid. 

The time step was 5 minutes of simulation time; steps usually took one to three iterations of the Newton method to converge. Each simulation went on until any of the viscous fingers reached the production well.

Every 2 hours of simulation time, the map of the polymer concentration was analyzed. For each concentration $C$ from an equidistant grid of concentrations

$$C\in\{0.005{c_{\max}},0.015{c_{\max}},\ldots 0.985{c_{\max}}, 0.995{c_{\max}}\}$$

\noindent the positions of the forefront and rear front for that concentration at time $t_i$ are defined as

\begin{equation}\label{eq:forefront-reafront}
\begin{aligned}
\mathrm{forefront}(t_i, C) = \max x : \exists y \in [0.05H, 0.95H]: c(t_i, x, y) \leqslant C\\
\mathrm{rearfront}(t_i, C) = \min x : \exists y \in [0.05H, 0.95H]: c(t_i, x, y) \geqslant C
\end{aligned}
\end{equation}

\noindent where $H = 36$ meters is the height of the simulation field. The decision to cut
off top and bottom 5\% from analysis came from the fact that the no-flow boundary conditions 
on the horizontal borders of the field favor the forming of thin viscous fingers that move faster than ordinary fingers (see Fig.~\ref{fig:borderfinger}).

\begin{figure}[h]
  \begin{center}
    \includegraphics[width=0.45\textwidth]{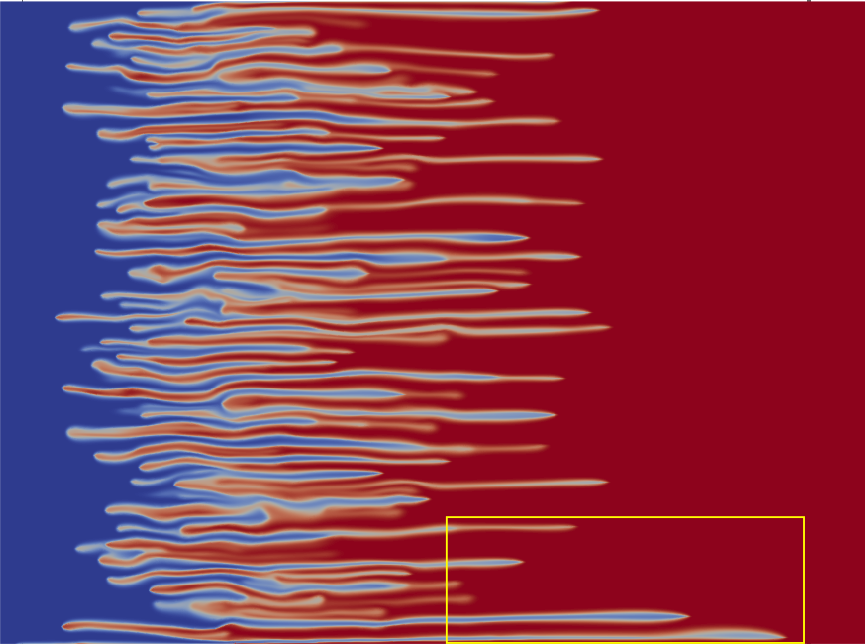}
    \includegraphics[width=0.45\textwidth]{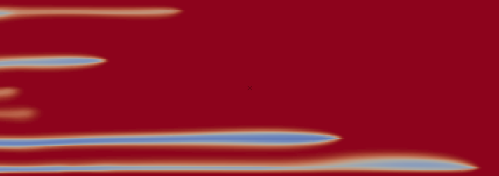}
    \end{center}
  \caption{An example of a viscous finger that is close
  to the bottom border and moves faster than other fingers. On the right picture an enlarged region is shown. The picture was obtained during a simulation with the quadratic viscosity model, $M = 10$. Several other simulations have thin fingers on horizontal borders.}\label{fig:borderfinger}
    \end{figure}
  
The average velocities of fronts were estimated from the time sequence of the positions of fronts via linear regression without intercept:

\begin{equation}\label{eq:fronts-regression}
\begin{aligned}
v^f(C) \approx \frac{\sum_i t_i\,\mathrm{forefront}(t_i, C)}{\sum_i t_i^2},& \quad
v^b(C) \approx \frac{\sum_i t_i\,\mathrm{rearfront}(t_i, C)}{\sum_i t_i^2}
\end{aligned}
\end{equation}
where, as noted above, $t_i = t_{i-1} + 2\ \mathrm{hours}$.

\section{Analysis of the results}
\label{sect4}

\subsection{Forward front analysis}\label{sec:forward-front-12}

Since the trajectory for a given concentration of the front is not exactly a straight line (see Fig. \ref{fig:front}), the question arises of how to determine the speed of the front. The most natural approach is to define the value $v^f(C)$ as a linear regression, see \eqref{eq:fronts-regression}. At the same time, following the spirit of Theorem \ref{Theorem_apost} it is helpful to consider 
\begin{equation}\label{eq:fronts-max}
v^f_{\max}(C) = \max_{t > 0.05} \frac{\mathrm{forefront}(t, C)}{t}.
\end{equation}

\begin{figure}[h!]
    \centering
    \includegraphics[width=0.6\textwidth]{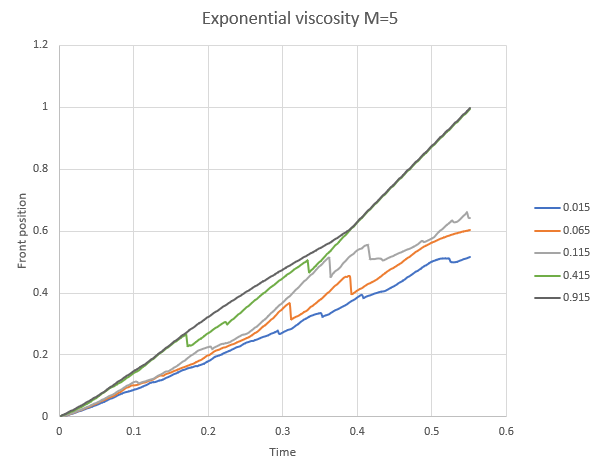}
    \caption{Position of the front for $C/c_{\max} = 0.015, 0.065, 0.115, 0.415, 0.915$, for exponential viscosity with $M = 5$}
    \label{fig:front}
\end{figure}

Results before $t = 0.05$ (in normalized units) are excluded because of the dominance of numerical diffusion over viscous fingering at early steps of simulation.

In Fig. \ref{fig:front-analysis} we plot the front velocities as functions of concentration $C$ calculated by \eqref{eq:fronts-regression}, \eqref{eq:fronts-max} and the speed determined by the TFE estimate ${\overline{m}(C, c_{\max})}/{m(c_{\max})}$, see \eqref{TFE-estimate-mod-1}. Note that since the first two curves are monotonically non-decreasing, each of them should have only one point of intersection with the third curve, which is monotonically decreasing (being a mean value of an integral of a decreasing function $m(c)$). 
Let $c^*_0$ be the $c$-coordinate of the intersection point for either $v^f_{\max}$ or $v^f$ and the TFE estimate. Then Theorem \ref{Theorem_apost} suggests that the velocity function should be approximately constant for $C>c^*_0$.

\begin{figure}[h]
\centering
\includegraphics[width = 0.32\linewidth]{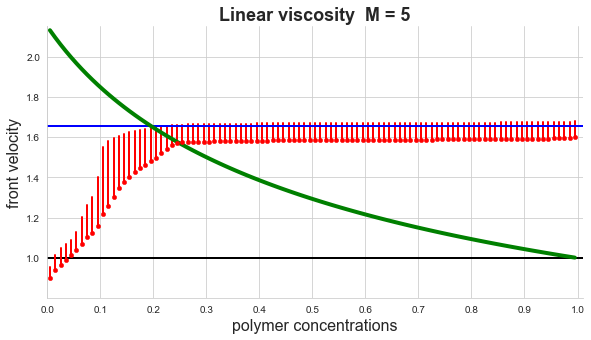} 
\includegraphics[width = 0.32\linewidth]{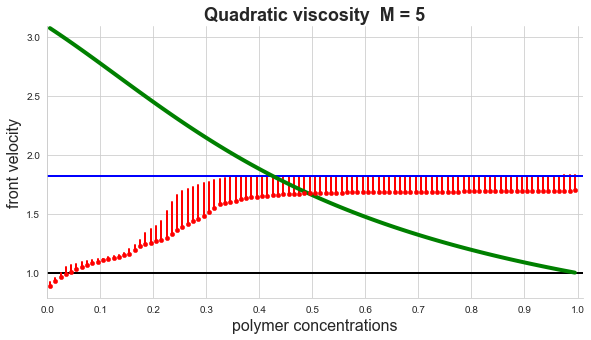} 
\includegraphics[width = 0.32\linewidth]{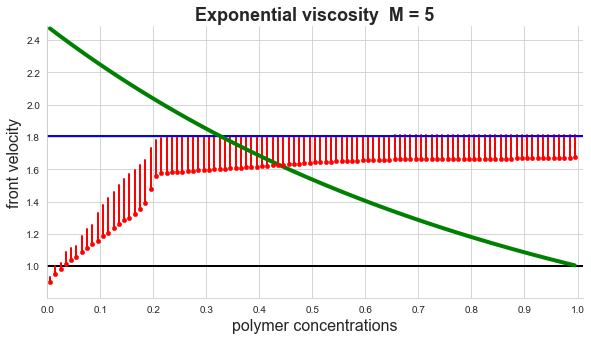}

\includegraphics[width = 0.32\linewidth]{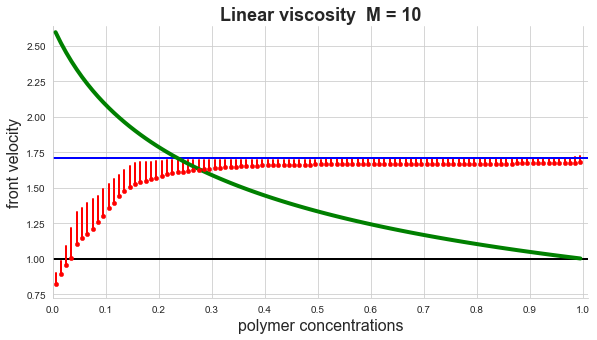} 
\includegraphics[width = 0.32\linewidth]{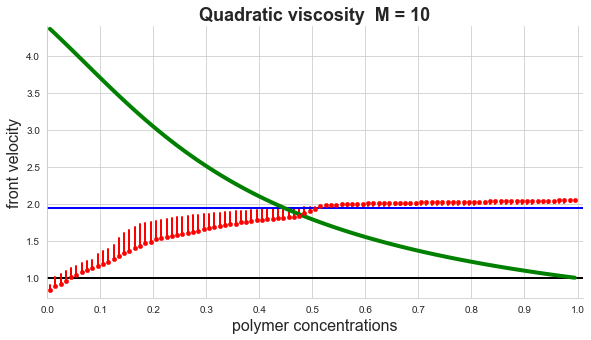}
\includegraphics[width = 0.32\linewidth]{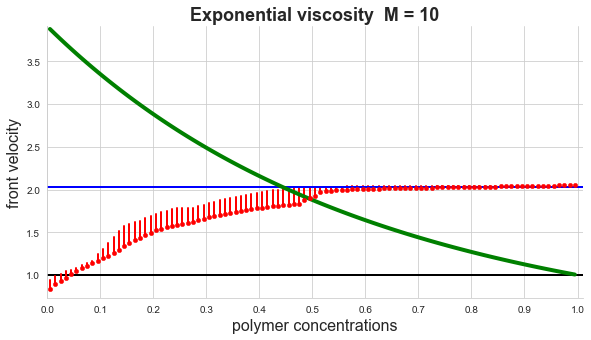}

\includegraphics[width = 0.32\linewidth]{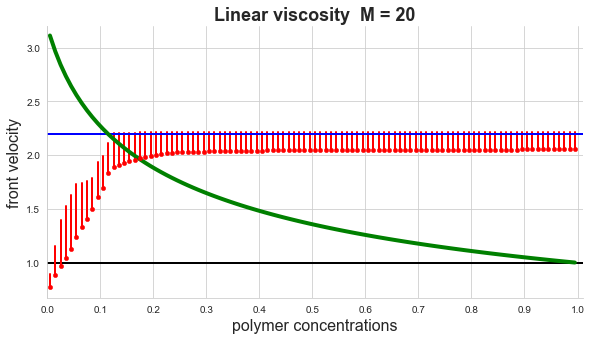} 
\includegraphics[width = 0.32\linewidth]{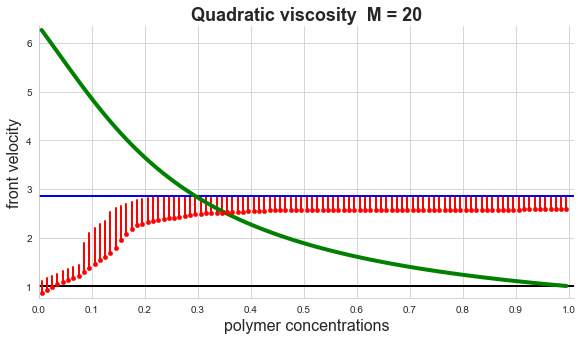} \includegraphics[width = 0.32\linewidth]{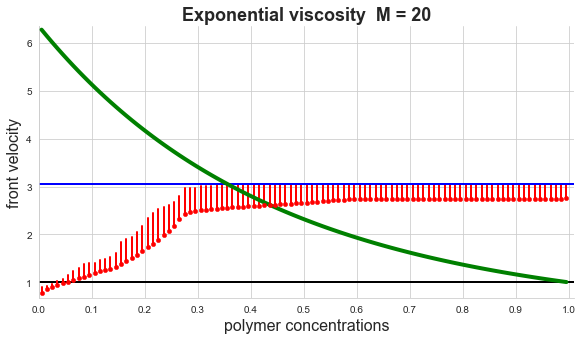}

\includegraphics[width = 0.32\linewidth]{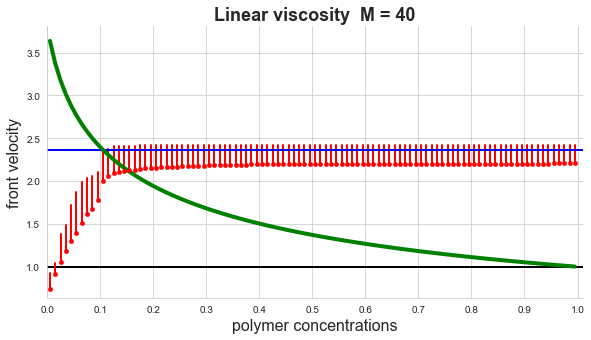} 
\includegraphics[width = 0.32\linewidth]{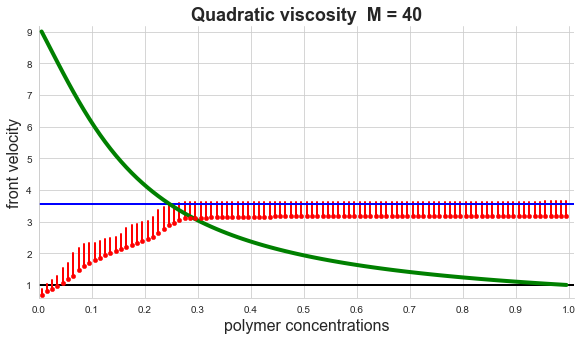} 
\includegraphics[width = 0.32\linewidth]{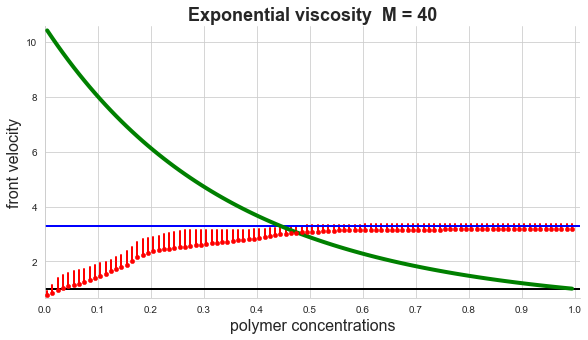}

\caption{Comparison of the speed of the fastest finger with estimates by \eqref{TFE-estimate-mod-1}. The green curve corresponds to \eqref{TFE-estimate-mod-1}, filled dots correspond to \eqref{eq:fronts-regression}, upper point of the vertical strip corresponds to \eqref{eq:fronts-max}. The vertical strip shows the difference between \eqref{eq:fronts-regression} and \eqref{eq:fronts-max}. The results are shown for linear (left column), quadratic (middle column), and exponential (right column) viscosity with contrasts (from top to bottom) $M = 5, 10, 20, 40$.} \label{fig:front-analysis}

\end{figure}

\begin{remark}
    We would like to stress two moments that make Theorems \ref{Theorem_apost} and \ref{Theorem_apost_back} not strictly applicable to the plots on Fig.~\ref{fig:front-analysis}. First, Theorem \ref{Theorem_apost} is valid for the TFE approximation, while the simulations are made with the Peaceman model. There is no known analytical relation between the TFE and Peaceman models. Second, estimate \eqref{a_post_est} has the constant term $l_1$, while expressions in \eqref{eq:fronts-regression} assume linear regression and \eqref{eq:fronts-max} does not take $l_1$ into account.
\end{remark}

The aim of the paper is to understand the role of intermediate concentrations in the speed of viscous fingers. We distinguish between three different situations:
\begin{itemize}
    \item[\textbf{A:}] The velocity function is approximately constant for $C>c^*_0$ and has a significant drop for $c<c^*_0$ and fulfilled the estimate \eqref{eq-rem2}.
    \item[\textbf{B:}] The velocity function is approximately constant for $C>c_1$, where $c_1<c^*_0$  and fulfilled the estimate~\eqref{eq-rem2}.
    \item[\textbf{C:}] The velocity function is not approximately constant for $C>c^*_0$ or not satisfy the inequality \eqref{eq-rem2}.    
\end{itemize}
The cases \textbf{A, B, C} are not defined rigorously. Sometimes we attribute a simulation to two cases simultaneously, when the exact decision cannot be made. In Tables \ref{tab:summ-avg} and \ref{tab:summ-max} we summarize the results of the simulations for velocities $v^f(C)$ and $v^f_{\max}(C)$, respectively.

\begin{table}
\centering
\parbox{.45\linewidth}
{
    \centering
    \begin{tabular}{c|c|c|c}
       & linear & quadratic & exponential  \\ \hline
M = 5  & A & A, B & B  \\
M = 10 & A, B & C &  A, C \\
M = 20 & A, B & B &  B\\
M = 40 & A & A &  A
\end{tabular}
    \caption{Role of intermediate concentration for average velocity}
    \label{tab:summ-avg}
}
\hspace{10pt}
\parbox{.45\linewidth}
{
    \centering
    \begin{tabular}{c|c|c|c}
       & linear & quadratic & exponential  \\ \hline
M = 5  & A & A, B & B  \\
M = 10 & A & C & A \\
M = 20 & A & B & A, B\\
M = 40 & A & A & A
\end{tabular}
    \caption{Role of intermediate concentration for maximum velocity}
    \label{tab:summ-max}
}
\end{table}

Note that for both velocity curves $v^f(C)$ and $v^f_{\max}(C)$ the right part of the velocity graph is nearly flat.
The difference between $v^f$ and $v^f_{\max}$ could be considered as uncertainty of velocity, as the position of the front is not exactly linear.

In most of the cases the graph of the average velocity $v^f(C)$ changes smoother to the left of the flat part  than for the maximal velocity $v^f_{\max}(C)$. This makes it less valuable as it is more difficult to identify the left boundary of the flat part, therefore increasing the number of mixed labels \textbf{A, B} and \textbf{A, C} in Table \ref{tab:summ-avg} compared to Table \ref{tab:summ-max}. 
In some of the cases velocity has a mild incline on the right side of the graph and at some intermediate concentration the speed changes much more significantly (for example, quadratic viscosity with $M=20$ and exponential viscosity with $M=5$ and $M=20$). In that case we interpret the part of the graph with a mild incline as the flat part. 
It is possible that the intersection point is significantly lower than the flat part (2 simulations): 
\begin{itemize}
\item for exponential viscosity with $M=10$ the distance in $C$-coordinate between the intersection point and $c_0^*$ is approximately 0.02, which we considered insignificant and marked \textbf{A, C};  
\item for quadratic viscosity with $M=10$ the distance in $C$ is around 0.04, which we considered more significant and marked this case as \textbf{C}. 
\end{itemize}
If the intersection point is inside the flat part (including mildly inclined cases) we marked it \textbf{B} or \textbf{A, B} depending on how close the intersection point is to the boundary of the flat region. Other cases we marked~\textbf{A}. 

For the maximal velocity we also observe a long flat part in the right end of the graph. It is higher and typically longer than the flat part of $v^f$, so the intersection point with the curve \eqref{TFE-estimate-mod-1} is located to the left of the corresponding intersection with the average speed curve. The decrease from the flat part of the graph is typically more visible. Note that the intersection point with \eqref{TFE-estimate-mod-1} in the simulation for quadratic viscosity with $M=10$ is to the left from and significantly lower than the flat part of $v^f_{\max}$ (marked \textbf{C}). In most of the remaining cases it either coincides with (7 simulations, marked \textbf{A}) or appears to the right from (2 simulations, marked \textbf{B}) the smallest value of $C$ on the flat part. The cases of quadratic viscosity with $M=5$ and exponential with $M=20$ we marked as \textbf{A, B} since the speed has very mild decrease for $C<c_0^*$ and a much more significant decrease for $C< c_1$ for some $c_1<c_0^*$.

We give the following interpretation to the cases 
\begin{itemize}
    \item[\textbf{A:}] The intermediate concentration in fingers is the main mechanism for further speed drop compared to TFE estimates.
    \item[\textbf{B:}] There exists a significant drop in the speed comparing to TFE estimates, but intermediate concentration is not the only mechanism of speed reduction.
    \item[\textbf{C:}] Either the numerical simulation is of low quality or the TFE model is not a good approximation of the original equations.    
\end{itemize}

Recall that case \textbf{C} does not automatically mean low simulation quality, since equations \eqref{a_post_est}, \eqref{a_post_est1} used different constants $l_1$ and $l_2$ and the distance between the flat part and the point of the intersection is relatively small. At the same time, the decrease of the speed near the intersection point is non-typical and demands further investigation.

Below we consider in more details the most common representatives of the cases
\begin{itemize}
    \item[\textbf{A:}] linear viscosity with $M=20$.
    \item[\textbf{B:}] exponential viscosity with $M=5$.
    \item[\textbf{C:}] quadratic viscosity with $M=10$.    
\end{itemize}

For linear viscosity with $M=20$ (label \textbf{A}) we plot the level lines of concentration at a fixed time moment for 3 concentrations $c_1 < c_2\sim c^* < c_3$, Fig. \ref{fig:Lin20-3levels}.
We see that on 2 fastest fingers the level lines corresponding to $c_2$ and $c_3$ almost coincide, and the line corresponding to $c_1$ is located significantly to the left of them. Such a behavior fits our conjecture. In the third fastest finger we see a small island of concentration $c_1$, which seems to disappear soon, which could illustrate the mechanism of forming the intermediate concentration. In particular, we also observe several islands disconnected from other fingers, but they do not have an effect on the speed of the fastest finger. 

For exponential viscosity with $M=5$ (label \textbf{B}) we plot the level lines of concentration at a fixed time moment for 3 concentrations $c_1 < c_2 < c_3 \sim c^*$, Fig. \ref{fig:Exp5-3levels}.
We see that on the fastest finger level lines corresponding to $c_2$ and $c_3$ almost coincide. This is an indicator that the intermediate concentration in the finger does not coincide with $c^*$ and is lower. Note that the line corresponding to $c_1$ is located significantly to the left of the lines for $c_2$ and $c_3$, which tells us that the intermediate concentration exists, but does not coincide with $c^*$. 

\medskip
\begin{minipage}{0.49\textwidth}
    \centering
    \includegraphics[width = 0.95\textwidth]{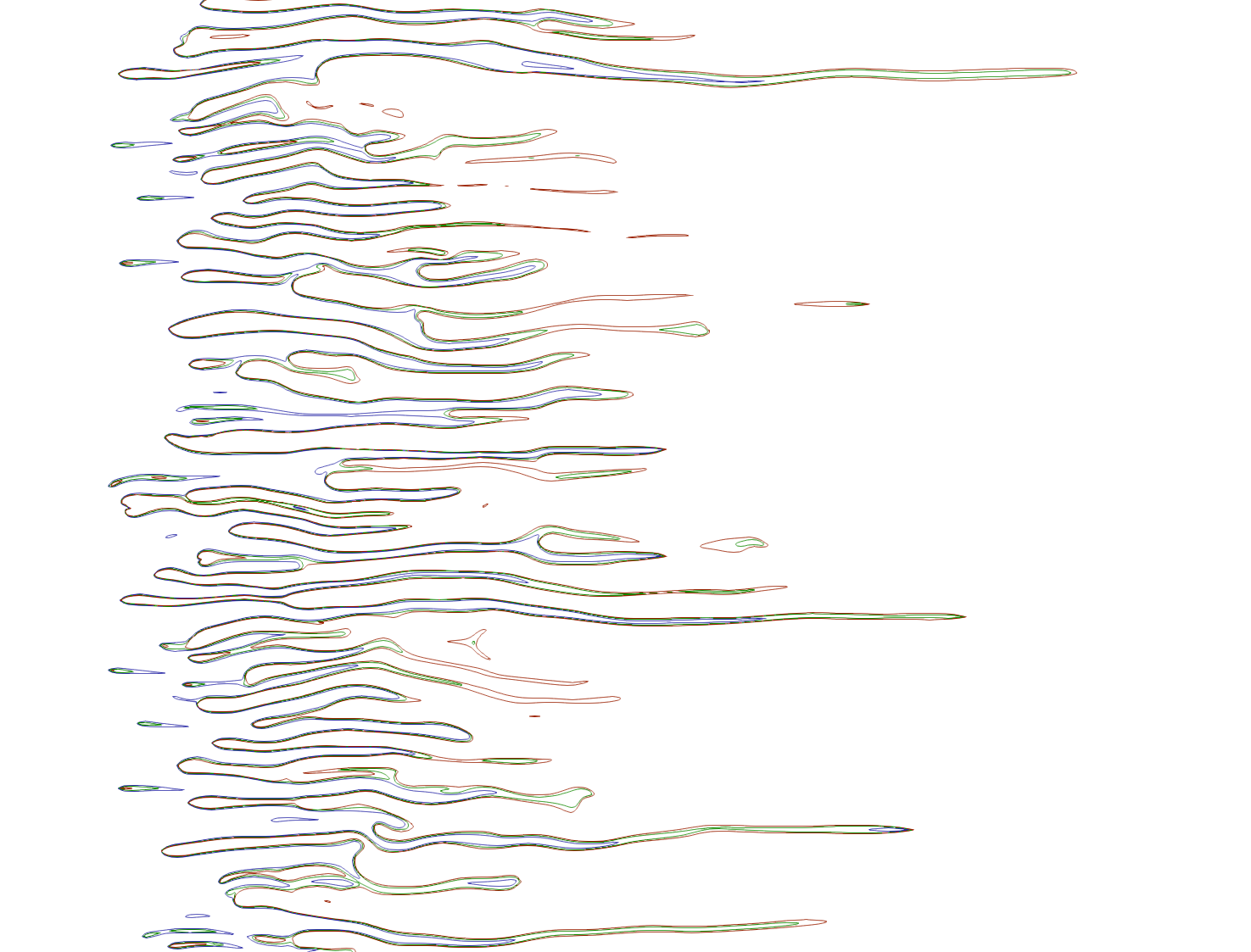}
    \captionof{figure}{Level lines for concentrations $c_1/c_{\max} = 0.075 $ (blue), $c_2/c_{\max}\sim c^*/c_{\max} = 0.125$ (green), $c_3/c_{\max} = 0.175$ (red).}
    \label{fig:Lin20-3levels}
\end{minipage}\hspace{20pt}
\begin{minipage}{0.49\textwidth}
    \centering
    \includegraphics[width = 0.95\textwidth]{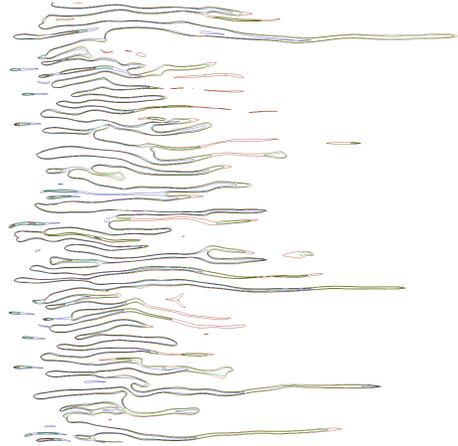}
    \captionof{figure}{Level lines for concentrations $c_1/c_{\max} = 0.125$ (blue), $c_2/c_{\max} = 0.225$ (green), $c_3/c_{\max} \sim c^*/c_{\max} = 0.325$ (red).}
    \label{fig:Exp5-3levels}
\end{minipage}
\medskip

The most important is the analysis of the case corresponding to label \textbf{C}, since it does not agree well with Theorem~\ref{Theorem_apost}. In Fig.~\ref{fig:quad10-2dim} we show the two-dimensional profile of the concentration at a fixed moment of time.
\begin{figure}[h]
    \centering
    \includegraphics[width=0.6\textwidth]{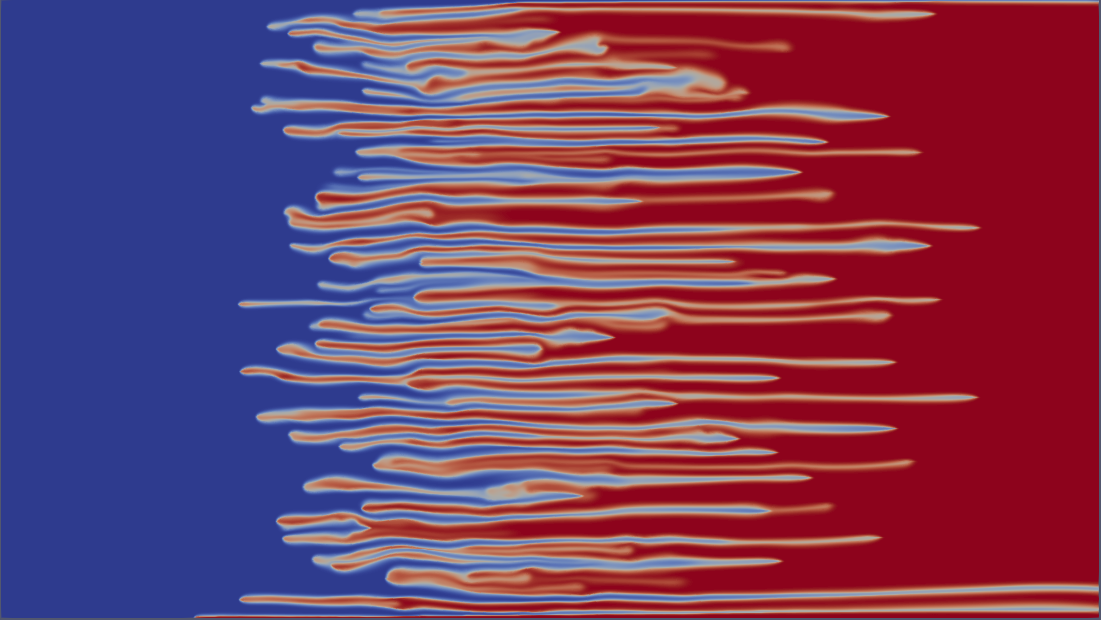}
    \caption{Value of concentrations at a fixed time moment for quadratic viscosity with $M=10$. Blue corresponds to $c=0$, red corresponds to $c=c_{\max}$.}
    \label{fig:quad10-2dim}
\end{figure}
In this picture one can see two  fast fingers near the bottom boundary. Since the behavior near the boundary can be significantly different from the behavior in the main part of the strip, we cut out the top and bottom 5\% of the strip \eqref{eq:forefront-reafront}. However, one of those two fingers was not cut out. This potentially affects the speed estimates. For this simulation we perform an analysis similar to Figure \ref{fig:front-analysis} cutting 7\% from the boundaries:
$$
\mathrm{forefront}(t_i, C) = \max x : \exists y \in [0.07H, 0.93H]: c(t_i, x, y) \leqslant C.
$$
This has allowed us to exclude the both fast fingers near the bottom boundary. The resulting graph of average and maximal velocities is shown in Figure \ref{fig:quad10-mod}.
\begin{figure}
    \centering
    \includegraphics[width = 0.6\textwidth]{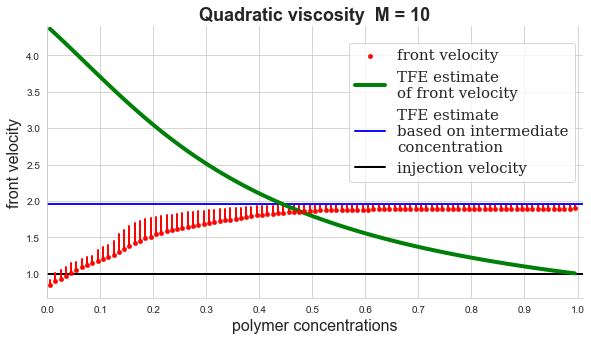}
    \caption{Comparison of the speed of the fastest finger with estimates by \eqref{TFE-estimate-mod-1} for quadratic viscosity with contrast $M=10$ with 7\% cut from the boundary.}
    \label{fig:quad10-mod}
\end{figure}
We see that the left point of flat part of velocity coincides with the point of intersection with estimates \eqref{TFE-estimate-mod-1} for both maximal and average velocities. We conclude that after the 7\% cutoff, the simulation for the quadratic viscosity with $M=10$ should be labeled~\textbf{A}.

\subsection{Rear front analysis}\label{sec:back-front-12}

As in the case of the forefront, we start with investigation of the position of the back front depending on concentration, see Fig. \ref{fig:back}.
\begin{figure}[h!]
    \centering
    \includegraphics[width=0.3\textwidth]{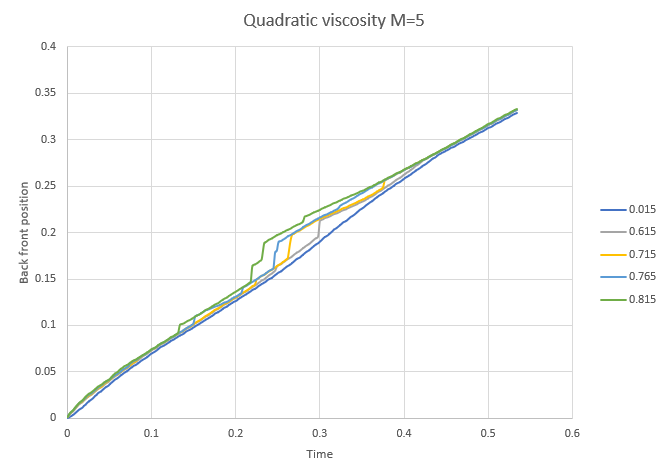}
    \includegraphics[width=0.3\textwidth]{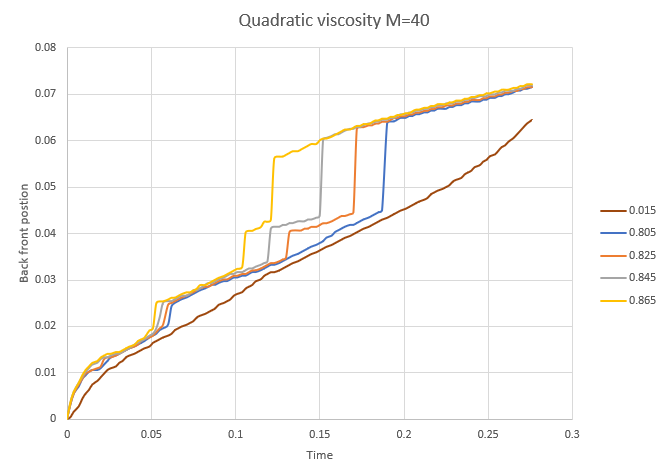}
    \includegraphics[width=0.3\textwidth]{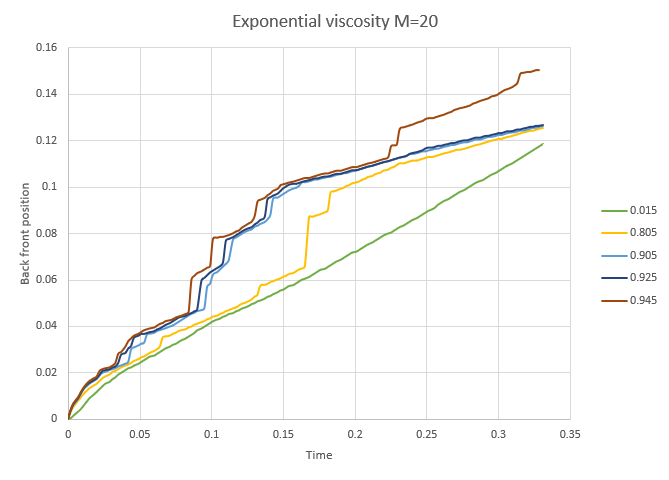}
    \caption{Position of the back front for various concentration levels for $\mu_\mathrm{quadratic}$ with $M=5$, $\mu_\mathrm{quadratic}$ with $M=40$, $\mu_\mathrm{exponential}$ with $M=20$}
    \label{fig:back}
\end{figure}
Note that for low values of concentration the position of the front is almost linear in time (for more details see \cite{Linear}). For higher concentrations, though, the position of the back front has significant jumps and cannot be well approximated by a line. We suggest that those jumps are the consequence of the numerical diffusion, and its effect is larger than in the case of the forward front since back front moves slower. We do not perform a detailed analysis of those jumps. 

Similarly to the forward front and following the spirit of Theorem \ref{Theorem_apost_back} we consider the minimal velocity defined by 
\begin{equation}\label{eq:fronts-min}
v^b_{\min}(C) = \min_t \frac{\mathrm{rearfront}(t, C)}{t}.
\end{equation}

We plot the back front velocities as functions of concentration $C$ calculated by \eqref{eq:fronts-regression} and \eqref{eq:fronts-min} and speed determined by the TFE estimate $\overline{m}(0, C)/m(0)$, see \eqref{TFE-estimate-mod-1}. Since the first two curves are non-decreasing and the third is monotonically decreasing, there should be only one point of intersection for each speed function.  Let $c_{*,0}$ be the $c$-coordinate of the intersection point for either $v^b_{\min}(C)$ of $v^b$ and TFE estimate. According to Theorem \ref{Theorem_apost_back} we expect the velocity function to be approximately constant for $C < c_{*,0}$. Compared to the case of the fastest finger, the back front demonstrates a much lower degree of agreement with estimates \eqref{TFE-estimate-mod-1} inspired by Theorem \ref{Theorem_apost_back} (see Fig. \ref{fig:back-analysis}). Below is an analysis similar to the fastest-finger case.
\begin{figure}[h]
\centering
    
\includegraphics[width = 0.32\linewidth]{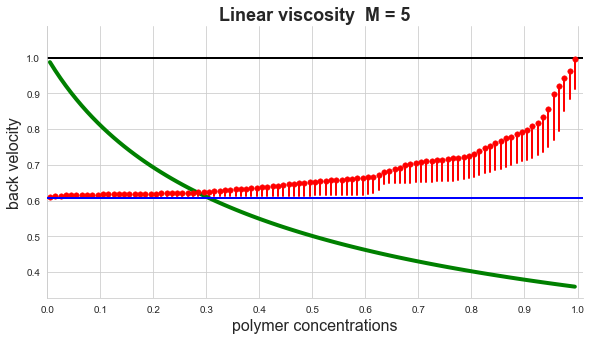} 
\includegraphics[width = 0.32\linewidth]{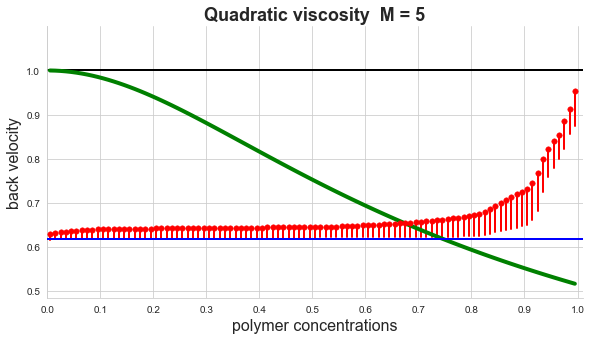} 
\includegraphics[width = 0.32\linewidth]{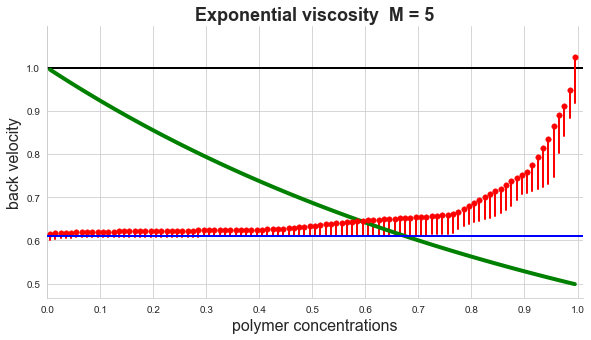}

\includegraphics[width = 0.32\linewidth]{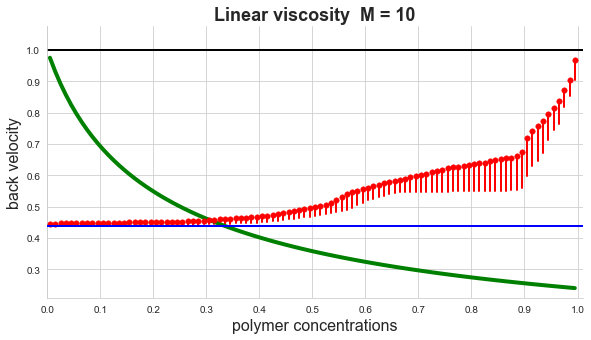} 
\includegraphics[width = 0.32\linewidth]{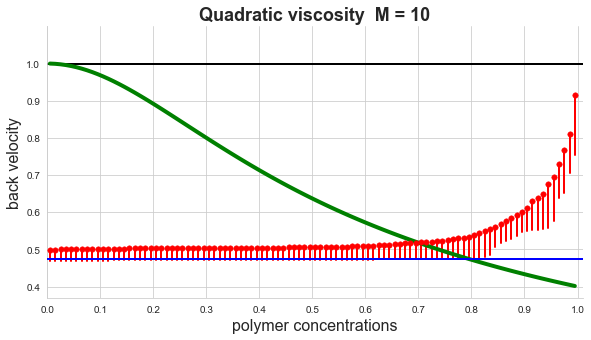} 
\includegraphics[width = 0.32\linewidth]{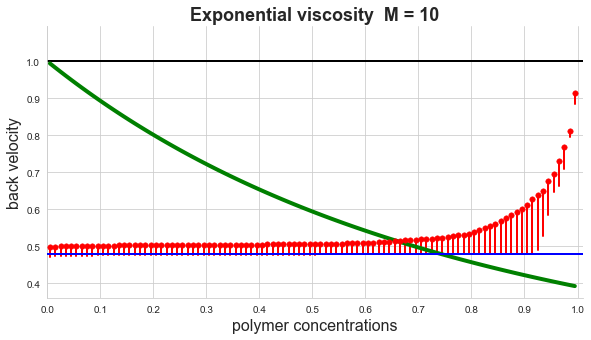}

\includegraphics[width = 0.32\linewidth]{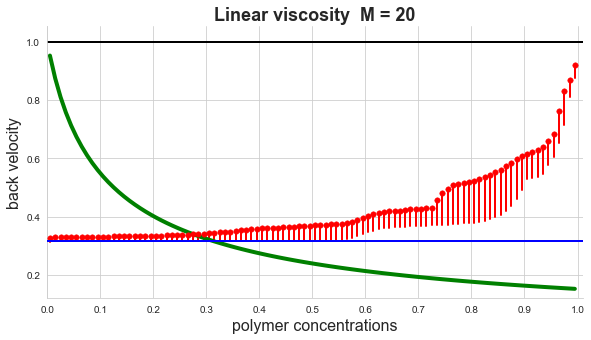} 
\includegraphics[width = 0.32\linewidth]{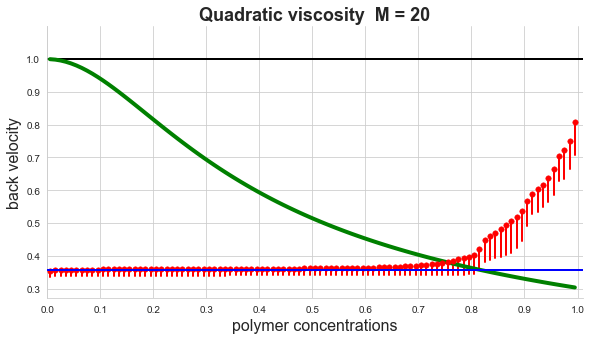} 
\includegraphics[width = 0.32\linewidth]{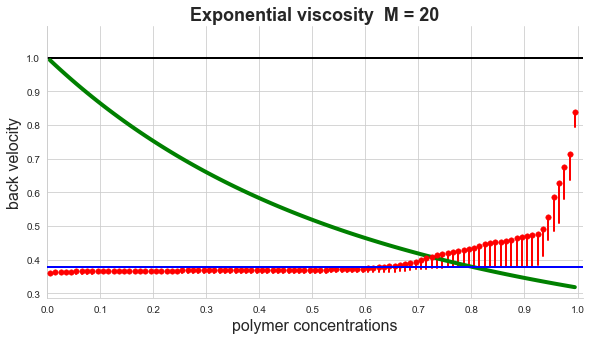}

\includegraphics[width = 0.32\linewidth]{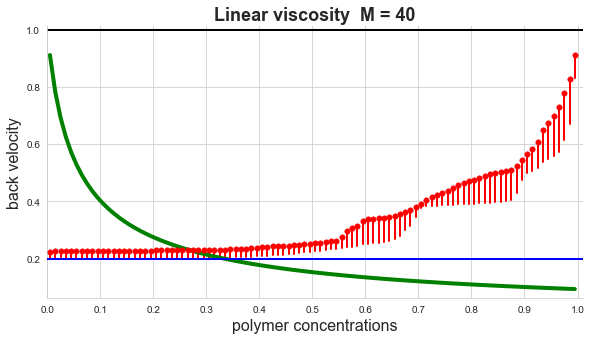} 
\includegraphics[width = 0.32\linewidth]{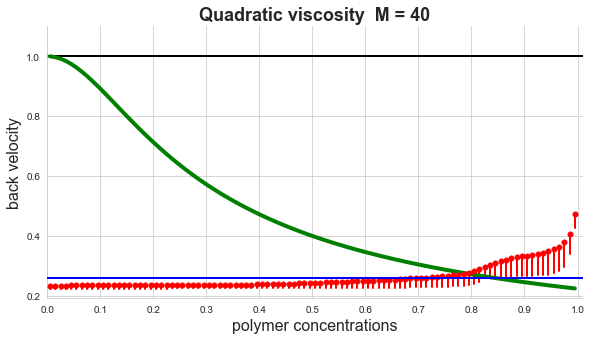} 
\includegraphics[width = 0.32\linewidth]{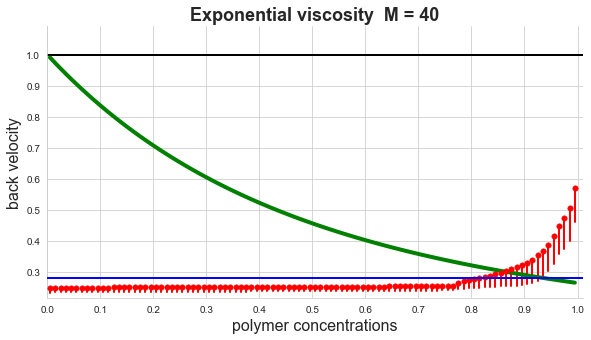}

\caption{Comparison of the speed of the back front with estimates by \eqref{TFE-estimate-mod-1}. The green curve corresponds to \eqref{TFE-estimate-mod-1}, filled dots correspond to \eqref{eq:fronts-regression}, lower point of the vertical strip corresponds to \eqref{eq:fronts-min}. The vertical strip shows the difference between  \eqref{eq:fronts-regression} and \eqref{eq:fronts-min}. The results are shown for linear (left column), quadratic (middle column), and exponential (right column) viscosity with contrasts (from top to bottom) $M = 5, 10, 20, 40$.} \label{fig:back-analysis}
\end{figure}
Similarly to the forward front we distinguish between three different situations:
\begin{itemize}
    \item[\textbf{A:}] The velocity function is approximately constant for $C<c_{*,0}$, has a significant increase for $C > c_{*,0}$ and fulfilled the estimates \eqref{eq-rem3}.
    \item[\textbf{B:}] The velocity function is approximately constant for $C<c_1$, where $c_1>c_{*,0}$ and fulfilled the estimates~\eqref{eq-rem3}.
    \item[\textbf{C:}] The velocity function is not approximately constant for $C<c_{*,0}$ or not satisfy estimates \eqref{eq-rem3}.    
\end{itemize}

Due to the significant nonlinearity of the back front for high values of $C$ we will not analyze Fig. \ref{fig:back-analysis} for velocities $v^b(C)$ defined by \eqref{eq:fronts-regression} and consider only velocities $v^b_{\min}(C)$. In Table  \ref{tab:back-summ-min} we summarize the results of the simulations for velocities $v^b_{\min}(C)$.
\begin{table}[h]
    \centering
    \begin{tabular}{c|c|c|c}
       & linear & quadratic & exponential  \\ \hline
M = 5  & A, B & A & A, B  \\
M = 10 & A & A & B \\
M = 20 & B & A, C & C\\
M = 40 & A & C & C
\end{tabular}
    \caption{Role of intermediate concentration for minimal velocity on the back front}
    \label{tab:back-summ-min}
\end{table}
In most of the cases classification could be made clearly. For linear and exponential viscosity with $M=5$ for $C>c_{*,0}$ we observe a very mild increase of the speed of the back front, which becomes much more significant for higher values of $C$, we label those cases \textbf{A, B}. For quadratic viscosity with $M=20$ the speed is not constant for $C < c_{*,0}$, at the same time it is constant for $C < c_{*,0}-0.01$, which is just one discrete step of $C$ of the graphics, we label this case as \textbf{A, C}. 

Note that for a small contrast of viscosity we clearly see cases \textbf{A} and \textbf{B}. For quadratic and exponential viscosity with high values of $M$ we see that the velocity of the back front starts to increase before intersection with the TFE estimates \eqref{TFE-estimate-mod-1}, which is not in agreement with Theorem \ref{Theorem_apost_back}. At the same time in those cases for $C > c_{*,0}$ locations of the front already experience significant jumps, see  Fig. \ref{fig:back} for quadratic viscosity with $M=40$ ($c_{*,0}/c_{\max} \sim 0.845$) and exponential viscosity with $M=20$ ($c_{*,0}/c_{\max} \sim 0.925$). 

Summarising the analysis of the back front, we do not observe a clear contradiction with Theorem \ref{Theorem_apost_back} and estimates \eqref{TFE-estimate-mod-1}. At the same time, clear confirmation of the estimates \eqref{TFE-estimate-mod-1} is observed in only 4 cases of 14. Therefore, we cannot confirm or disprove estimates \eqref{TFE-estimate-mod-1} for the back front.  

\section{Conclusions}
\label{conclusion}

We implemented a numerical scheme for viscous miscible displacement in homogeneous porous media for various viscosity functions and contrasts. Our choice of the model is motivated by its potential application in simulating post-flush scenarios following polymer flooding in oil reservoirs.
We study the evolution of the mixing zone and especially the speeds of its forefront and  rear front. Based on previous studies of correlations between those speeds and concentration levels, we suggest that two mechanisms play an important role in such relations: 
\begin{itemize}
    \item the diffusive slowdown according to the transverse flow equilibrium model;
    \item the development of an intermediate concentration within viscous fingers.
\end{itemize}
We propose improved estimates of the speeds of the two
edges of the mixing zone. The new estimates align well with numerical simulations of the speed of the forefront, but less so with the rear front, possibly due to significant non-linearity in the evolution of the rear front position over time. 
Note that the improved estimates depend on the values of intermediate concentrations. Further research is required to either estimate these values or determine them from experiments. 

\section*{Acknowledgements}

The authors are extremely grateful to the anonymous reviewers for their constructive suggestions and comments, which allowed to improve exposition of the manuscript. 

The research of Fedor Bakharev, Aleksandr Enin, Sergey Matveenko, Nikita Rastegaev, and Dmitry Pavlov was performed at
the Saint-Petersburg Leonhard Euler International Mathematical Institute and supported
by the Ministry of Science and Higher Education of the Russian Federation
(agreement no. 075-15-2022-287).

The research of Sergey Matveenko and Dmitry Pavlov was supported by ``Native towns'', a social investment program of PJSC ``Gazprom Neft''.

The research of Sergey Tikhomirov and Yulia Petrova were supported by Projeto Paz and Coordenação de Aperfeiçoamento de Pessoal de Nível Superior - Brasil (CAPES) - Finance Code 001. Sergey Tikhomirov is additionally supported by FAPERJ APQ1 E-26/210.702/2024 (295291) and CNPq grant 404123/2023-6. Yulia Petrova is additionally supported by FAPERJ APQ1 E-26/210.700/2024 (295287) and CNPq grant 406460/2023-0.

Figures \ref{fig:onefinger}, \ref{fig:borderfinger}, \ref{fig:Lin20-3levels},
\ref{fig:Exp5-3levels}, \ref{fig:quad10-2dim}  were obtained with Paraview~\cite{ParaView},
an open-source data visualization tool.

Figures \ref{fig:conjectureillustrated}, \ref{fig:front-analysis}, 
\ref{fig:quad10-mod}, \ref{fig:back-analysis}, \ref{fig:finer-grid}  were obtained with Seaborn~\cite{Waskom2021},
an open-source data visualization tool.

\section*{Declaration of Interests} The authors report no conflict of interest.

\section*{Appendix A}

\begin{proof}[Proof of Theorem~\ref{Theorem_apost}]
The main step in the proof (Lemma~\ref{Lemma_apost} below) is the construction of a partial lower bound for the solution $c(t,x,y)$ of \eqref{03-TFE-1}, \eqref{03-TFE} in the form of a traveling wave solution $c_{lb,\delta}(t,x)$ of the equation
\begin{equation}
\label{appendix-Lower_bound_equation}
\dfrac{\partial c_{lb,\delta}}{\partial t} + \dfrac{m(c_{lb,\delta})}{m(c_{\max})}\dfrac{\partial c_{lb,\delta}}{\partial x} = D \dfrac{\partial^2 c_{lb,\delta}}{\partial x^2}
\end{equation}
connecting levels $(c^*-\delta)$ and $(c_{\max}-\delta)$ for any small $\delta>0$. Recall that a function $g(t, x): \mathbb{R}_+ \times\mathbb{R} \to \mathbb{R}$, 
is a \textit{traveling wave} with speed $v\in\mathbb{R}$ connecting states $g_-\in\mathbb{R}$ and $g_+\in\mathbb{R}$, if it has the form $g(t,x)=w(x-vt)$, where $w:\mathbb{R}\to\mathbb{R}$ 
is a continuous function, which satisfies $w(-\infty)=g_-$ and $w(+\infty)=g_+$.
\begin{remark}
    To show the existence of a traveling wave solution $c_{lb,\delta}(t,x)$ 
    for equation~\eqref{appendix-Lower_bound_equation} connecting states $(c^*-\delta)$ and $(c_{\max}-\delta)$, we just
    rewrite the equation~\eqref{appendix-Lower_bound_equation} as
    \begin{align}
    \label{appendix-Lower_bound_equation1}
        \dfrac{\partial c_{lb,\delta}}{\partial t} + \dfrac{\partial F(c_{lb,\delta})}{\partial x} = D \dfrac{\partial^2 c_{lb,\delta}}{\partial x^2},\qquad F(c)=\frac{1}{m(c_{\max})}\int\limits_{0}^c m(c')\,dc',
    \end{align}
    and notice that the function $F(c)$ is concave ($F''(c)=m'(c)<0$) and $c^*-\delta<c_{\max}-\delta$. The speed of the traveling wave is defined by the Rankine-Hugoniot condition~\cite[Sect.~3.4.1]{evans2022pdes}:
    \begin{align}
    \label{RH}
        v=\frac{F(c_{\max}-\delta)-F(c^*-\delta)}{(c_{\max}-\delta) - (c^*-\delta)}
        =\frac{\overline{m}(c^*-\delta,c_{\max}-\delta)}{m(c_{\max})}.
    \end{align}
\end{remark}

The idea of comparison of the original solution  with a traveling wave solution of a simpler one-dimensional equation~\eqref{appendix-Lower_bound_equation} was proposed in~\cite{Otto2006, Yortsos} for the case $c^* = 0$. We generalize this idea for $c^*>0$ 
and formalize it in the following Lemma.
\begin{lemma}
\label{Lemma_apost}
Under the assumptions of Theorem \ref{Theorem_apost}, let $c_{lb, \delta}(t,x)$ be a traveling wave solution of \eqref{appendix-Lower_bound_equation} with limits $(c^*-\delta)$ as $x\to -\infty$ and $(c_{\max}-\delta)$ as $x\to +\infty$, such that
\begin{equation}
\label{eq-travelling-wave-below}
c(0,x,y) > c_{lb,\delta}(0,x), \qquad \forall x \geqslant x^f(0, c^*),\, y\in[0,H],
\end{equation}
and also
\begin{align}
\label{init-inequality}
c_{lb,\delta}(0, l_1) < c^*.
\end{align}
Then
\begin{align}
\label{comparison}
    c(t,x,y) > c_{lb,\delta}(t,x), \qquad \forall t>0, \, x \geqslant x^f(t, c^*),\,  y\in[0,H].
\end{align}
\end{lemma}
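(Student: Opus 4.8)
The plan is to establish the comparison \eqref{comparison} by a parabolic maximum principle for the difference $w:=c-c_{lb,\delta}$ on the moving domain $\Omega(t)=\{(x,y):x\geqslant x^f(t,c^*),\ y\in[0,H]\}$, reducing everything to a first-contact argument whose only genuinely new ingredient is the control of $w$ on the moving left boundary through the a posteriori bound \eqref{a_post_est}. First I would derive the equation for $w$: subtracting \eqref{appendix-Lower_bound_equation} from \eqref{03-TFE-1} in advective form (using $\mathrm{div}\,\bm q=0$ and $\partial_y c_{lb,\delta}=0$) yields
\[
\partial_t w+q^x\partial_x w+q^y\partial_y w-D\Delta w=-\Bigl(q^x-\tfrac{m(c_{lb,\delta})}{m(c_{\max})}\Bigr)\partial_x c_{lb,\delta}.
\]
Two facts are decisive. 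Since $c\leqslant c_{\max}$ and $m$ is \emph{decreasing} ($m'<0$, as noted after \eqref{RH}), the transverse average satisfies $\frac1H\int_0^H m(c)\,dy\geqslant m(c_{\max})$, whence $q^x\leqslant m(c)/m(c_{\max})$; and the profile is increasing, $\partial_x c_{lb,\delta}>0$, because it connects the lower level $c^*-\delta$ to the higher level $c_{\max}-\delta$. Splitting $m(c)=m(c_{lb,\delta})+(m(c)-m(c_{lb,\delta}))$ in the source, I can write it as $g+a\,w$ with $a\geqslant0$ bounded and $g\geqslant0$, and moreover $g>0$ at every point where $c<c_{\max}$. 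Thus $w$ is a supersolution: $\partial_t w+q^x\partial_x w+q^y\partial_y w-D\Delta w-a\,w=g\geqslant0$.

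Next I would verify $w>0$ on the parabolic boundary of $\Omega(t)$. On $\{t=0\}$ this is hypothesis \eqref{eq-travelling-wave-below}. On $y\in\{0,H\}$ the no-flow condition $q^y=0$, together with the attendant homogeneous Neumann condition for the diffusive flux, gives $\partial_y w=0$, so Hopf's lemma forbids a contact there. As $x\to+\infty$ the far field is undisturbed, $c\to c_{\max}$ while $c_{lb,\delta}\to c_{\max}-\delta$, so $w\to\delta>0$. The crux is the left boundary $x=x^f(t,c^*)$: there $c\geqslant c^*$ by the definition \eqref{x-front} of the front tip (a slice dipping below $c^*$ for $x>x^f$ would, tending to $c_{\max}>c^*$, recross $c^*$ farther right, contradicting the supremum), while \eqref{a_post_est} controls the other side. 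Writing $c_{lb,\delta}(t,x)=W(x-vt)$ with $W$ increasing and $v$ the Rankine--Hugoniot speed \eqref{RH}, the inequality $v\geqslant\overline{m}(c^*,c_{\max})/m(c_{\max})$ (the interval shifts left and $m$ decreases) forces $x^f(t,c^*)-vt\leqslant l_1$, hence $c_{lb,\delta}(t,x^f(t,c^*))\leqslant W(l_1)=c_{lb,\delta}(0,l_1)<c^*$ by \eqref{init-inequality}. Therefore $w>0$ on the left boundary for all $t>0$.

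It then remains to run the first-contact argument. If $w$ were not everywhere positive, let $t_0$ be the first time it vanishes; by the boundary analysis the contact point $(x_0,y_0)$ is interior, where $w=0$, $\nabla w=0$, $\Delta w\geqslant0$ and $\partial_t w\leqslant0$. In the supersolution identity the term $a\,w$ then drops out and $\partial_t w=D\Delta w+g\geqslant g>0$ (strict, since $c(t_0,x_0,y_0)=c_{lb,\delta}<c_{\max}$), contradicting $\partial_t w\leqslant0$. This is the mechanism that propagates positivity, and it is exactly where the decreasing monotonicity of $m$ and the positivity of $\partial_x c_{lb,\delta}$ enter.

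The hard part is making this rigorous on the \emph{moving, non-smooth} boundary $x^f(\cdot,c^*)$ and on the \emph{unbounded} strip, where the infimum of $w$ may fail to be attained and $x^f(\cdot,c^*)$ need not be regular in $t$. I would handle this by working on truncated domains $\{x^f(t,c^*)\leqslant x\leqslant R\}$ and, if necessary, comparing with $w+\varepsilon e^{\lambda t}$ for $\lambda>\|a\|_\infty$ (to render the zeroth-order coefficient favorable and force the minimum to be attained), passing to the limits $R\to\infty$ and $\varepsilon\to0$ at the end; the far-field margin $\delta$ and the strict inequality $c_{lb,\delta}(0,l_1)<c^*$ supply the uniform room that these limits require.
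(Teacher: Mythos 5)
Your proposal is correct and follows essentially the same route as the paper's proof: the same first-contact maximum-principle argument for $w=c-c_{lb,\delta}$, driven by the two key facts $q^x\leqslant m(c)/m(c_{\max})$ (strict where $c<c_{\max}$) and $\partial_x c_{lb,\delta}>0$, and the same treatment of the moving left boundary via the Rankine--Hugoniot speed comparison together with \eqref{a_post_est} and \eqref{init-inequality}. Your closing paragraph on truncation and the $\varepsilon e^{\lambda t}$ perturbation addresses attainment of the first-contact point on the unbounded strip, a technical point the paper's proof passes over silently.
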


\begin{proof}
Let us prove the inequality~\eqref{comparison} for two cases separately: (a) on the boundary for $x=x^f(t,c^*)$; and (b) inside the domain for $x>x^f(t,c^*)$.

(a) Note that due to the Rankine-Hugoniot condition the wave $c_{lb, \delta}$ travels with speed, defined in~\eqref{RH}, which we estimate
\[
\dfrac{\overline{m}(c^*-\delta, c_{\max}-\delta)}{m(c_{\max})} > 
\dfrac{\overline{m}(c^*, c_{\max})}{m(c_{\max})}
\]
due to monotonicity of the mobility function $m$. Thus, using~\eqref{x-front},~\eqref{a_post_est} and~\eqref{init-inequality}, 
we obtain
\begin{align*}
c(t, x^f(t, c^*), y) &\geqslant c^* > c_{lb,\delta}(0,l_1) 
= c_{lb,\delta}\Big(t,\frac{\overline{m}(c^*-\delta, c_{\max}-\delta)}{c_{\max}}\cdot t + l_1\Big) > c_{lb,\delta}(t, x^f(t, c^*)).
\end{align*}
Therefore, on the boundary we have
\[
c(t,x,y) > c_{lb,\delta}(t,x), \qquad x = x^f(t, c^*), t>0, y \in [0, H].
\]
Thus, if the inequality~\eqref{comparison} breaks, it must be inside the area $x > x^f(t, c^*)$.

(b) Consider the function $\vartheta(t,x,y):=c(t,x,y)-c_{lb,\delta}(t,x)$. We aim to demonstrate that $\vartheta>0$ for all $t>0, x > x^f(t, c^*),  y\in[0,H]$. Proof by contradiction. 
Let $t^*>0$ be the smallest time for which the inequality~\eqref{comparison} breaks, and $(x^*,y^*)$ be the space point where it happens. This gives us several estimates. First,
\begin{equation}
\label{01-B5} 
\vartheta(t^*,x^*,y^*)=0.
\end{equation}
Second,
\begin{equation}
\label{01-B6} 
\partial_t \vartheta(t^*,x^*,y^*) \leqslant 0,
\end{equation}
since for all $t<t^*$ we have $\vartheta(t,x^*,y^*)>0$. Third, if $y^*\in(0, H)$, then
\begin{equation}
\label{01-B7} 
\partial_x \vartheta(t^*,x^*,y^*)=\partial_y \vartheta(t^*,x^*,y^*)= 0 \quad \mbox{and} \quad \Delta \vartheta(t^*,x^*,y^*)\geqslant 0,
\end{equation}
since for all $x \geqslant x^f(t, c^*),  y\in[0,H]$ we have $\vartheta(t^*,x,y)\geqslant 0=\vartheta(t^*,x^*,y^*)$. And if $y^* = 0 \text{ or } H$, then
\begin{equation}
\label{01-B8} 
q^y(t^*, x^*, y^*) = 0
\end{equation}
due to no-flow boundary condition.

On the other hand, combining \eqref{01-Peaceman_model} and \eqref{appendix-Lower_bound_equation}, we obtain
\begin{align*}
\partial_t\vartheta(t^*,x^*,y^*)+q^x\partial_x\vartheta(t^*,x^*,y^*)&+q^y\partial_y\vartheta(t^*,x^*,y^*)-D \Delta \vartheta(t^*,x^*,y^*) \\
{} &= \left(\frac{m(c_{lb,\delta}(x^*,t^*))}{m(c_{\max})} - q^x(t^*,x^*,y^*)\right)\partial_x c_{lb,\delta}(t^*,x^*).
\end{align*}
The left-hand side is non-positive due to the estimates~\eqref{01-B5}--\eqref{01-B8}
and the right-hand side is strictly positive due to the monotonicity of the traveling wave solution (it strictly increases, i.e. $\partial_x c_{lb,\delta}>0$) and the trivial estimate
\[
q^x = \dfrac{m(c)}{\displaystyle \frac{1}{H}\int_0^H\!\!\! m(c)\,dy} < \dfrac{m(c)}{m(c_{\max})}.
\]
We arrive at a contradiction, thus proving Lemma~\ref{Lemma_apost}.
\end{proof}

Let us finish the proof of Theorem~\ref{Theorem_apost} using Lemma~\ref{Lemma_apost}. 

For any fixed $\delta > 0$ it is possible to construct a traveling wave satisfying the assumptions of Lemma \ref{Lemma_apost}. We just need to shift its initial state sufficiently to the right, thereby satisfying the relation \eqref{eq-travelling-wave-below}. As the result, $x^f(t,\widetilde{c})$ can never outrun the resulting traveling wave. Therefore, for any $\widetilde{c} \in (c^*, c_{\max})$ there exists a constant $l_2 > 0$, which depends on $\delta$, such that 
\begin{align}
\label{front-estimate-1}
x^f(t, \widetilde{c}) \leqslant \dfrac{\overline{m}(c^*-\delta, c_{\max}-\delta)}{m(c_{\max})} \cdot t + l_2.
\end{align}
As $\delta$ can be any positive small number, we immediately get the inequality~\eqref{a_post_est1}. Theorem~\ref{Theorem_apost} is proven.
\end{proof}

\begin{corollary}
\label{corollary-thm1}
Assume that $c(0,x,y)=c_{\max}$ for all $x\geqslant x_0$ and $y\in[0,H]$. Using the explicit form of the traveling wave solution for the equation~\eqref{Lower_bound_equation}, we can formulate the main conclusion of Theorem~\ref{Theorem_apost} in an alternative form: 
for any $\widetilde{c} \in (c^*, c_{\max})$ there exists a function $l_2(t) > 0$ such that $l_2(t) = O(\ln(t))$ as $t\to+\infty$ and 
\begin{equation}
\label{a_post_est2}
x^f(t, \widetilde{c}) \leqslant \dfrac{\overline{m}(c^*, c_{\max})}{m(c_{\max})} \cdot t + l_2(t).
\end{equation}

Similarly, if we postulate that $c(0,x,y)=0$ for all $x\leqslant 0$ and $y\in[0,H]$, then the  conclusion of Theorem~\ref{Theorem_apost_back} can be formulated in an alternative form:  
for any $\widetilde{c} \in (0, c_*)$ there exists a function $l_2(t) > 0$ such that $l_2(t) = O(\ln(t))$ as $t\to+\infty$ and 
\begin{equation}
\label{a_post_est_back2}
x^b(t, \widetilde{c}) \geqslant \dfrac{\overline{m}(0, c_*)}{m(0)} \cdot t - l_2(t).
\end{equation}
\end{corollary}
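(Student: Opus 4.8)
The plan is to upgrade the family of estimates \eqref{front-estimate-1} produced inside the proof of Theorem~\ref{Theorem_apost} by letting the auxiliary parameter $\delta$ depend on $t$ and then optimizing. For each fixed $\delta>0$ that proof yields
$$x^f(t,\widetilde c)\leqslant \frac{\overline m(c^*-\delta,c_{\max}-\delta)}{m(c_{\max})}\,t+l_2(\delta),$$
where the constant $l_2(\delta)$ is the horizontal shift needed to place the traveling wave $c_{lb,\delta}$ below the solution. Writing $v(\delta)=\overline m(c^*-\delta,c_{\max}-\delta)/m(c_{\max})$ and $v(0)=\overline m(c^*,c_{\max})/m(c_{\max})$, I would split the right-hand side as $v(0)\,t+\bigl(v(\delta)-v(0)\bigr)t+l_2(\delta)$ and seek $\delta=\delta(t)$ that keeps both correction terms of order $\ln t$.

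Two quantitative ingredients are needed. First, a Taylor expansion of the speed: differentiating $\overline m(c^*-\delta,c_{\max}-\delta)=\tfrac{1}{c_{\max}-c^*}\int_{c^*-\delta}^{c_{\max}-\delta}m$ in $\delta$ gives $v(\delta)-v(0)=\tfrac{m(c^*)-m(c_{\max})}{(c_{\max}-c^*)\,m(c_{\max})}\,\delta+O(\delta^2)$, a strictly positive multiple of $\delta$ because $m$ is decreasing; hence $\bigl(v(\delta)-v(0)\bigr)t=O(\delta t)$. Second, the growth of the shift $l_2(\delta)$ as $\delta\to0$, which is precisely where the explicit form of the wave enters. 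Using the first integral of \eqref{appendix-Lower_bound_equation1}, namely $D\,c_{lb,\delta}'=F(c_{lb,\delta})-v\,c_{lb,\delta}+\mathrm{const}$ with $F'(c)=m(c)/m(c_{\max})$, and linearizing at the left state $c^*-\delta$, the profile approaches $c^*-\delta$ through an exponential tail with rate $\lambda_\delta=(F'(c^*-\delta)-v(\delta))/D>0$ (positivity by concavity of $F$). Consequently the profile coordinate at which the wave equals the tracked level $c^*$ is $\lambda_\delta^{-1}\ln\delta+O(1)\to-\infty$, so the shift forced by condition \eqref{init-inequality} satisfies $l_2(\delta)=-\lambda_0^{-1}\ln\delta+O(1)$ with $\lambda_0=(F'(c^*)-v(0))/D$. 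The saturation hypothesis $c(0,x,y)=c_{\max}$ for $x\geqslant x_0$ guarantees that the right part of condition \eqref{eq-travelling-wave-below} imposes no stronger constraint, so this logarithmic rate governs $l_2(\delta)$.

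Combining the two ingredients gives $x^f(t,\widetilde c)\leqslant v(0)\,t+O(\delta t)-\lambda_0^{-1}\ln\delta+O(1)$, and choosing $\delta=\delta(t)\asymp 1/t$ balances the corrections: the term $O(\delta t)$ becomes $O(1)$ while $-\lambda_0^{-1}\ln\delta=\lambda_0^{-1}\ln t+O(1)$. This delivers \eqref{a_post_est2} with $l_2(t)=O(\ln t)$. The estimate \eqref{a_post_est_back2} then follows by the mirror-image argument: the upper-bound traveling wave used for Theorem~\ref{Theorem_apost_back} has an exponential tail at the state $c_*$, the companion of \eqref{init-inequality} forces the same logarithmic shift, and the analogous optimization in $\delta$ yields the $O(\ln t)$ correction. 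I expect the main difficulty to be the rigorous control of $l_2(\delta)$: one must show that the logarithmically growing shift dictated by the left tail is simultaneously sufficient for the two-dimensional fitting condition \eqref{eq-travelling-wave-below} near the $c^*$-front, with all $O(1)$ error terms uniform in $\delta$ as $\delta\to0$, so that the optimization over $\delta=\delta(t)$ is legitimate.
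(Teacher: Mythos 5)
Your proposal is correct and follows essentially the same route as the paper: both decompose the bound as $v(0)t+(v(\delta)-v(0))t+l_2(\delta)$ with $(v(\delta)-v(0))t=O(\delta t)$, show $l_2(\delta)=O(|\ln\delta|)$ from the behaviour of the traveling wave near its left state $c^*-\delta$, and then take $\delta\asymp 1/t$. The only cosmetic difference is that you extract the logarithm by linearizing the first integral $Dw'=F(w)-vw+\mathrm{const}$ at the left equilibrium (exponential tail with rate $\lambda_\delta$), whereas the paper integrates that same first integral exactly to the quadrature formula \eqref{formula-TW} and reads off the $O(|\ln\delta|)$ divergence from the factor $(r-c^*+\delta)$ in the denominator, which also supplies the uniform-in-$\delta$ error control you flag as the remaining difficulty.
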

\begin{proof}
We prove \eqref{a_post_est2} here, the proof of \eqref{a_post_est_back2} is similar. Let us write more accurately the dependence on $\delta$ in the estimate~\eqref{front-estimate-1}. In particular, let $l_2=l_2(\delta)$. We further estimate 
\[
\dfrac{\overline{m}(c^*-\delta, c_{\max}-\delta)}{m(c_{\max})} \leqslant \dfrac{\overline{m}(c^*, c_{\max})}{m(c_{\max})} + C_{\overline{m}}\delta
\]
for some $C_{\overline{m}}>0$, arriving at
\begin{align}
\label{appendix-estimate}
x^f(t, \widetilde{c}) \leqslant \dfrac{\overline{m}(c^*, c_{\max})}{m(c_{\max})} \cdot t + l_2(\delta) + C_{\overline{m}}\delta t.
\end{align}
To make further progress, we need to estimate the growth of $l_2(\delta)$ as $\delta\to 0$. Note that $l_2(\delta)$ is bounded from above by the initial position $x_{\widetilde{c}}$ of the level set $\widetilde{c}$ of the traveling wave $c_{lb,\delta}$ (by definition $c_{lb,\delta}(0,x_{\widetilde{c}})=\widetilde{c}$). To finish the proof, it is enough to show that
\begin{align}
\label{logarithmic-bound}
    l_2(\delta)\leqslant x_{\widetilde{c}}=O(|\ln(\delta)|), \quad \delta \to 0.
\end{align}
Indeed, by choosing $\delta = 1/t$ in Eq.~\eqref{appendix-estimate}, we obtain the necessary estimate~\eqref{a_post_est2}.

\medskip
In order to show the logarithmic upper bound~\eqref{logarithmic-bound}, let us be more precise in how we fix the initial shift of the traveling wave.  For any $\delta>0$ we specify the position $x_{c^*}$ of the level set $c^*$, that is $c_{lb,\delta}(0,x_{c^*})=c^*$, such that the assumptions~\eqref{eq-travelling-wave-below} and~\eqref{init-inequality} are satisfied. In particular, to guarantee this we can assign $x_{c^*} = \max(x_0, l_1+\delta)$ 
for all $\delta > 0$.
The  condition~\eqref{eq-travelling-wave-below} is satisfied for $x\in (x^f(0,c^*),x_{c^*})$ due to inequality $c(0,x,y)>c^*>c_{lb,\delta}(0,x)$; and for $x\geqslant x_{c^*}$ due to trivial estimate $c(0,x,y)=c_{\max}>c_{lb,\delta}(0,x)$. The condition~\eqref{init-inequality} is satisfied due to $l_1<x_{c^*}$.
 
The following formula is valid
\begin{align}
\label{formula-TW}
x_{\widetilde{c}} = x_{c^*}+  \int\limits_{c^*}^{\widetilde{c}} \dfrac{D \, dr}{(r - c^* + \delta)\left[\frac{\overline{m}(c^*-\delta, r)}{m(c_{\max})} - 
v \right]},
\end{align}
 and provides the logarithmic bound for $x_{\widetilde{c}}$ as $\delta\to0$ due to the  linear growth of the denominator in the integrand of~\eqref{formula-TW} near $r=c^*$.  
 Indeed, the second multiplicative term in the denominator is asymptotically constant near $r = c^*$ as $\delta\to0$
 \begin{align*}
     \frac{\overline{m}(c^*-\delta, r)}{m(c_{\max})} - v = \frac{\overline{m}(c^*-\delta, r)}{m(c_{\max})} - \frac{\overline{m}(c^*-\delta,c_{\max}-\delta)}{m(c_{\max})}=\frac{m(c^*)-\overline{m}(c^*,c_{\max})}{m(c_{\max})}(1+o(1)).
 \end{align*}

The derivation of~\eqref{formula-TW} is standard, but for completeness of the proof we present it. Putting the traveling wave ansatz $c_{lb,\delta}(t,x)=w(x-vt)$, $\xi=x-vt$, into equation~\eqref{appendix-Lower_bound_equation1}, we get
\begin{align*}
    -
    v w'(\xi) + (F(w(\xi)))'=Dw''(\xi).
\end{align*}
Integrating from $-\infty$ to $\xi$, and using the fact that $w'(-\infty)=0$, we obtain
\begin{align*}
    [-
    v w(\xi)+F(w(\xi))]\biggr\rvert_{-\infty}^\xi
    =Dw'(\xi).
\end{align*}
As $w=w(\xi)$ is a strictly increasing function, we can define the inverse function $\xi=\xi(w)$, and using $w(-\infty)=c^*-\delta$, we get
\begin{align*}
    \xi'(w)=\frac{D}{-v(w-c^*+\delta)+F(w)-F(c^*-\delta)}.
\end{align*}
Integrating now in $w$ from $c^*$ to $\widetilde{c}$ and using the definitions of $F$ and $\overline{m}$, we get formula
\begin{align*}
\xi(\widetilde{c}) = \xi(c^*)+  \int\limits_{c^*}^{\widetilde{c}} \dfrac{D \, dr}{(r - c^* + \delta)\left[\frac{\overline{m}(c^*-\delta, r)}{m(c_{\max})} - 
v \right]}.
\end{align*}
Putting $t=0$ (that is $\xi=x$), we finally arrive at~\eqref{formula-TW}. The estimate~\eqref{a_post_est2} in Corollary~\ref{corollary-thm1} is proven.
\end{proof}

\section*{Appendix B}
Two additional simulations were performed on a finer grid ($5120\times 2880$) to further test the 
estimates (\ref{TFE-estimate-mod-estimates}) with smaller (numerical) diffusion and
a hence higher P\'eclet number. The chosen two setups were: linear viscosity with $M=5$
and exponential viscosity with $M=5$. We do not claim that we made systematic investigation of simulations for this value of P\'eclet number. The main purpose of those simulations was to check if the speed of the mixing zone and intermediate concentration can seriously change with growth of P\'eclet number. 

\begin{figure}[h]
\centering
\includegraphics[width = 0.43\linewidth]{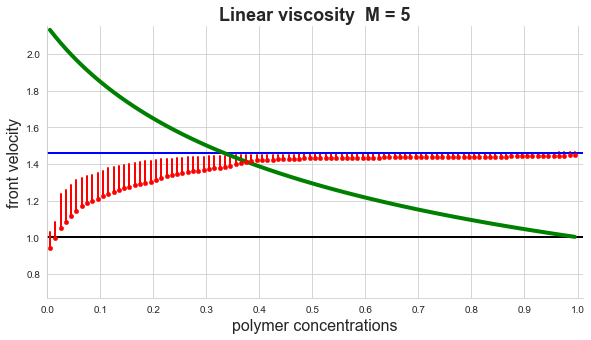}
\includegraphics[width = 0.43\linewidth]{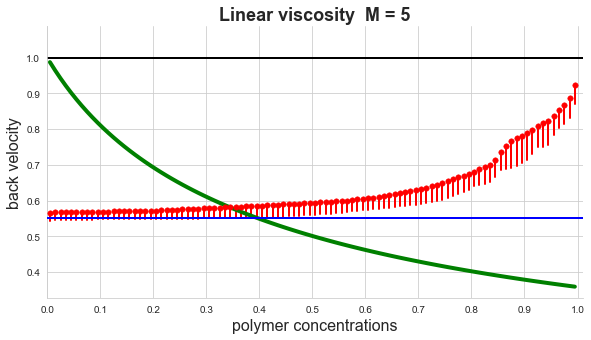}
\includegraphics[width = 0.43\linewidth]{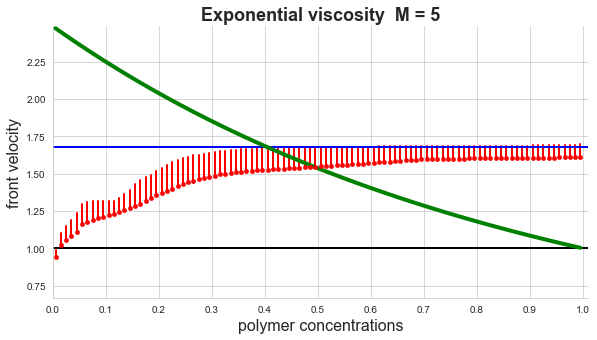}
\includegraphics[width = 0.43\linewidth]{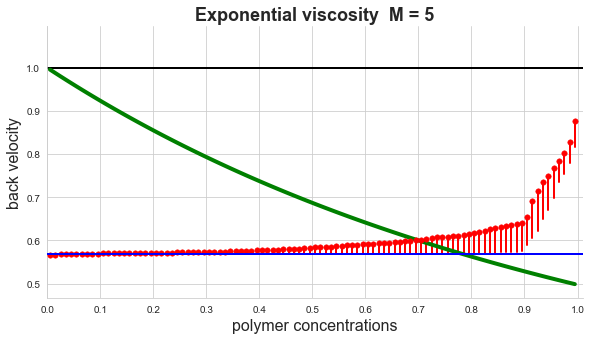}
    \caption{Comparison of the speed of forefront (left) and back front (right) with \eqref{TFE-estimate-mod-1}, for two simulations made on a $5120\times 2880$ grid.}
    \label{fig:finer-grid}
\end{figure}

Applying analysis similar to Sect.~\ref{sec:forward-front-12},~\ref{sec:back-front-12} (see Fig.~\ref{fig:finer-grid}) we can conclude the following classification of the fronts behavior for maximal and minimal velocities, see Table \ref{tab:finer-ABC}. All four cases are classified as \textbf{A} or \textbf{B}, which does not contradict our estimates \eqref{TFE-estimate-mod-estimates}. Values of intermediate concentrations $c_0^*/c_{\max}$, $c_{*, 0}/c_{\max}$ for grids $2560 \times 1440$ and $5120\times 2880$ are shown in Table \ref{tab:compare-grids}. It is important to note that the value of intermediate concentration for fastest finger and back front are far from 0 and 1 respectively. For fastest finger it actually increased, which is potentially an indicator of high variability of the value of intermediate concentration. Note that we do not claim in the paper any estimates on intermediate concentration or monotonicity of its dependence on P\'eclet number, but relation between the speed and intermediate concentration: equations~\eqref{TFE-estimate-mod-1}, Tables~\ref{tab:summ-max},~\ref{tab:back-summ-min},~\ref{tab:finer-ABC}.

\begin{table}[h]
\centering
    \begin{tabular}{r|c|c}
       & front & back   \\ \hline
linear, M = 5  & A & A   \\
exponential, M = 5 & A, B & B  \\
\end{tabular}
    \caption{Role of intermediate concentration for minimal/maximal velocity on a finer grid}
    \label{tab:finer-ABC}
\medskip    
\begin{tabular}{r|c|c}
       & linear, M = 5 &  exponential, M = 5  \\ \hline
$2560 \times 1440$  & $c_0^*/c_{\max} = 0.195$, $c_{*, 0}/c_{\max} = 0.305$ & $c_0^*/c_{\max} = 0.325$, $c_{*, 0}/c_{\max} = 0.675$ \\
$5120\times 2880$  & $c_0^*/c_{\max} = 0.335$, $c_{*, 0}/c_{\max} = 0.395$ & $c_0^*/c_{\max} = 0.405$, $c_{*, 0}/c_{\max} = 0.785$  \\
\end{tabular}
    \caption{Values of intermediate concentration on original and finer grid.}
    \label{tab:compare-grids}
\end{table}

\newpage 
\bibliography{references.bib}

\end{document}